\documentclass[11pt]{article}

\usepackage[utf8]{inputenc}
\usepackage{hyperref}
\usepackage{amsmath,amsthm,amssymb,caption,subcaption,nicefrac}
\usepackage[margin=1in]{geometry}
\usepackage{bm}
\usepackage[table]{xcolor}
\usepackage{multirow}
\usepackage{arydshln}
\usepackage{csquotes}
\usepackage{anyfontsize}
\usepackage[shortlabels]{enumitem}

\usepackage{amsmath}
\usepackage{amssymb}
\usepackage{mathtools}
\usepackage{amsthm}

\usepackage{algorithm}
\usepackage[noend]{algorithmic}
\usepackage[algo2e,ruled,noend]{algorithm2e}

\usepackage{cleveref}

\newtheorem{theorem}{Theorem}

\newtheorem{lemma}[theorem]{Lemma}
\newtheorem{corollary}[theorem]{Corollary}

\newtheorem{definition}[theorem]{Definition}
\newtheorem{observation}[theorem]{Observation}

\newtheorem{assumption}[theorem]{Assumption}
\newtheorem{claim}[theorem]{Claim}

\usepackage{etoolbox}
\AtBeginEnvironment{proposition}{\crefalias{theorem}{proposition}}
\AtBeginEnvironment{lemma}{\crefalias{theorem}{lemma}}
\AtBeginEnvironment{corollary}{\crefalias{theorem}{corollary}}
\AtBeginEnvironment{fact}{\crefalias{theorem}{fact}}
\AtBeginEnvironment{definition}{\crefalias{theorem}{definition}}
\AtBeginEnvironment{observation}{\crefalias{theorem}{observation}}
\AtBeginEnvironment{conjecture}{\crefalias{theorem}{conjecture}}
\AtBeginEnvironment{assumption}{\crefalias{theorem}{assumption}}
\AtBeginEnvironment{claim}{\crefalias{theorem}{claim}}

\newcommand{\cX}{\mathcal{X}}

\newcommand{\cM}{\mathcal{M}}

\newcommand{\ind}[1]{\mathbf{1}\left\{#1\right\}}
\newcommand{\eps}{\varepsilon}
\newcommand{\N}{\mathbb{N}}
\newcommand{\Z}{\mathbb{Z}}
\newcommand{\R}{\mathbb{R}}

\newcommand{\E}{\mathbb{E}}
\newcommand{\cE}{\mathcal{E}}
\newcommand{\cG}{\mathcal{G}}

\newcommand{\cD}{\mathcal{D}}
\newcommand{\cP}{\mathcal{P}}

\newcommand{\cH}{\mathcal{H}}

\renewcommand{\phi}{\varphi}

\newcommand{\tc}{\tilde{c}}
\newcommand{\tT}{\tilde{T}}

\newcommand{\tO}{\tilde{O}}

\newcommand{\cC}{\mathcal{C}}

\newcommand{\cO}{\mathcal{O}}

\DeclareMathOperator{\Geo}{Geom}
\DeclareMathOperator{\Ber}{Ber}
\DeclareMathOperator{\Lap}{Lap}

\DeclareMathOperator{\dist}{dist}

\newcommand{\alg}{\textsc{ALG}}
\newcommand{\algbaseline}{\textsc{ALG}_{\mathrm{base}}}
\newcommand{\algdeletion}{\textsc{ALG}_{\mathrm{del}}}

\newcommand{\EM}{\textsc{ExpMech}}

\newcommand{\opt}{\textsc{opt}}
\newcommand{\val}{\textsc{val}}

\renewcommand{\setminus}{\smallsetminus}

\newcommand{\scr}{\mathrm{scr}}

\newcommand{\Edef}{\mathcal{E}_{\mathrm{def}}}
\newcommand{\Egood}{\mathcal{E}_{\mathrm{good}}}

\newcommand{\Paren}[1]{\left(#1\right)}

\allowdisplaybreaks

\definecolor{Gred}{RGB}{219, 50, 54}
\definecolor{Ggreen}{RGB}{60, 186, 84}
\definecolor{Gblue}{RGB}{72, 133, 237}
\definecolor{Gyellow}{RGB}{247, 178, 16}
\definecolor{ToCgreen}{RGB}{0, 128, 0}
\definecolor{myGold}{RGB}{231,141,20}
\definecolor{myBlue}{rgb}{0.19,0.41,.65}
\definecolor{myPurple}{RGB}{175,0,124}

\providecommand{\Comments}{0}
\ifnum\Comments>0
\usepackage[colorinlistoftodos,prependcaption,textsize=scriptsize]{todonotes}
\paperwidth=\dimexpr \paperwidth + 4.7cm\relax
\oddsidemargin=\dimexpr\oddsidemargin + 2.6cm\relax
\evensidemargin=\dimexpr\evensidemargin + 2.6cm\relax
\marginparwidth=\dimexpr\marginparwidth + 1.7cm\relax
\else
\usepackage[disable]{todonotes}
\fi

\newcommand{\mytodo}[1]{\ifnum\Comments=1{#1}\fi}

\newcommand{\tableoftodos}

\allowdisplaybreaks

\title{Differential Privacy Meets Fixed Parameter Tractability: Algorithms and Lower Bounds}
\date{\today}

\author{Pritish Kamath\\Google Research\\{\small \texttt{pritish@alum.mit.edu}}
\and
Ravi Kumar\\Google Research\\{\small \texttt{ravi.k53@gmail.com}}
\and
Pasin Manurangsi\\Google Research\\{\small \texttt{pasin@google.com}}
}

\begin{document}

\maketitle

\begin{abstract}
We study combinatorial optimization problems under the constraint of $\epsilon$-differential privacy ($\epsilon$-DP).  Given the strong lower bounds for explicitly outputting solutions, we work within the \emph{implicit representation} framework of Gupta et al. (SODA 2010), where a private polynomial-time randomized ``encoder'' generates a representation of a solution, and a ``decoder'' uses this representation along with the input to extract a valid final solution. 

In this work, we generalize this framework by allowing the encoder to run in fixed-parameter tractable time.  This circumvents  approximation barriers inherent to polynomial-time algorithms and 
obtains improved guarantees for many fundamental  combinatorial optimization problems.

Finally, we establish the first representation-independent lower bounds for our framework. Assuming a non-uniform variant of the Gap Exponential Time Hypothesis, 
for sufficiently small $\epsilon > 0$, we prove that no $\epsilon$-DP encoder--decoder pair can achieve certain approximation guarantees, if the decoder runs in subexponential time. We further provide representation-dependent lower bounds that hold even for larger $\epsilon$.
\end{abstract}

\section{Introduction}

Data privacy has become a critical requirement in the modern era of data analysis. As organizations collect vast amounts of sensitive information---ranging from social interactions to medical records---the risk of information leakage has motivated the adoption of Differential Privacy (DP) \cite{dwork2006calibrating}. DP provides a mathematically rigorous framework that ensures the output of an algorithm reveals little about any individual in the dataset, protecting against even adversaries with arbitrary auxiliary information. DP is formally defined as follows.

\begin{definition}[Differential Privacy \cite{dwork2006calibrating}]
\label{def:dp}
A randomized algorithm $\mathcal{M}$ is \emph{$\eps$-differentially private ($\eps$-DP)} if for all neighboring datasets $X \sim X'$ and all sets $\mathcal{S} \subseteq \mathrm{range}(\mathcal{M})$ of possible outcomes, we have: $\Pr[\mathcal{M}(X) \in \mathcal{S}] \le e^\eps \Pr[\mathcal{M}(X') \in \mathcal{S}]$,
where the probability is over the randomness of $\mathcal{M}$.
\end{definition}

A fundamental area of computer science with broad applications is combinatorial optimization. Many real-world problems can be naturally modeled as graph optimization tasks, such as finding influential individuals in a social network or clustering related entities. Since these graphs often encode sensitive relationships (e.g., communication patterns, financial transactions), solving combinatorial optimization problems with DP is an important research direction.

However, designing private algorithms for combinatorial optimization problems poses  significant challenges. For many problems, \emph{explicitly} outputting a solution (e.g., a specific subset of vertices) either inherently violates privacy or leads to trivial utility guarantees due to strong lower bounds. This difficulty is especially pronounced for problems with hard constraints (e.g., Vertex Cover), where every edge must be covered. Any algorithm that directly outputs a candidate subset with the uncertainty required by DP might fail to cover all edges, thereby producing an invalid solution.

To bypass these limitations,~\cite{GLMRT10} introduced the \emph{implicit representation} framework.  Here, the DP algorithm (``encoder'') does not output the solution directly. Instead, it privately publishes an \emph{implicit}  representation of the solution. A deterministic, non-private algorithm (``decoder'') then uses this representation, along with the original input, to construct the final valid solution. 

For the Vertex Cover problem,~\cite{GLMRT10} proposed a permutation-based representation, where the private encoder outputs a permutation of the vertices. Given this permutation, the decoder constructs a vertex cover by iterating through the edges and greedily picking the earliest endpoint vertex according to the permutation. They showed that this approach can achieve an expected approximation ratio of $2 + O(1/\eps)$. 
As $\eps \to \infty$, this  ratio approaches $2$, which is the best possible for any polynomial-time algorithm,  assuming the Unique Games Conjecture (UGC) \cite{KhotR08}.

Besides Vertex Cover, several other combinatorial optimization problems have also been studied in the model, including set cover, partial vertex cover, and facility location~\cite{GLMRT10,EGLW19,CEFG22,LNV23,GhaziKKMS24,Manurangsi25}. As~\cite{CEFG22} noted, the Joint DP model~\cite{KearnsPRU14} is also a form of implicit representation; thus, Joint DP results on matching, packing, allocation (e.g.,~\cite{HsuHRRW16,Hsu0RW16,HuangZ18}) can be 
interpreted in the implicit representation framework.

\subsection{Our Contributions}

We generalize the implicit representation framework and obtain new upper and lower bounds for many private combinatorial optimization problems. In particular, we generalize as follows:
\begin{itemize}[nosep]
\item \textbf{General Representation and Decoder:} The encoder can output an \emph{arbitrary}  implicit representation and the decoder can be any \emph{polynomial-time} algorithm; only specific representations (e.g., permutations or assignments) and decoders have been studied so far.
\item \textbf{Encoder:} The encoder can run in \emph{Fixed-Parameter Tractable (FPT)} time; previous work mainly focused on polynomial-time encoders.
\end{itemize}
Under these generalizations, we obtain the following results.
\begin{itemize}[nosep]
  \item \textbf{Upper Bounds via FPT:} 
  We show that the barriers inherent to polynomial-time encoders can be overcome by our generalized FPT encoder.  For example, for Vertex Cover, we show how to achieve $\left(1 + \tO\left(1/\eps\right)\right)$-approximation in FPT time. For sufficiently large $\eps > 0$, this overcomes the UGC-hardness of approximation for Vertex Cover, as discussed next.
    \item \textbf{Lower Bounds via Gap-ETH:} Despite our generalized decoder, we show a strong lower bound: Under a non-uniform version of Gap-ETH and for sufficiently small $\eps, \gamma > 0$, any $\eps$-DP algorithm in this model cannot achieve $(1 \pm \gamma)$-approximation for several problems, including Vertex Cover.
\end{itemize}

Below we elaborate on these contributions, starting with the formal definition of the implicit representation model.

\subsubsection{Implicit Representation Model}

We formalize a generalized model of privacy for combinatorial optimization. Rather than outputting an explicit solution directly---which may be difficult to do privately with high utility---the DP algorithm outputs an \emph{implicit representation} of a solution. This representation is then decoded into a valid solution using the public knowledge and the original input. 

\begin{definition}[Implicit Representation Scheme]
\label{def:implicit-rep}
An $\eps$-DP \emph{implicit representation scheme} for a problem $\Pi$ consists of two algorithms:
\begin{enumerate}[nosep]
    \item \textbf{Encoder $\mathcal{E}$}: A randomized $\eps$-DP algorithm that takes an input dataset $X$ and outputs an object $C$ from a representation space $\mathcal{C}$.
    \item \textbf{Decoder $\mathcal{D}$}: An algorithm that takes the input dataset $X$ and the object $C \in \mathcal{C}$, and outputs a valid solution for the instance $X$ of problem $\Pi$.
\end{enumerate}
The overall algorithm $\mathcal{A}$ for the problem $\Pi$ is defined by the composition $\mathcal{A}(X) = \mathcal{D}(X, \mathcal{E}(X))$.
\end{definition}

Because the decoder $\mathcal{D}$ has access to the full input dataset $X$, the final solution $\mathcal{D}(X, \mathcal{E}(X))$ need \emph{not} be DP with respect to $X$. Indeed, this is a crucial aspect of this model; as demonstrated in \cite{GLMRT10} requiring the final solution to be DP rules out any non-trivial solution for many covering and packing tasks with hard constraints. 

\paragraph{Efficient Decoder Assumption.} Without any further restriction, coming up with an $\eps$-DP implicit representation scheme is trivial: Let the encoder $\cE$ do nothing, and let $\cD$ solve the problem directly (e.g., via brute-force). Indeed, in all previous models, $\cD$ are often ``simple'' algorithms. For the purpose of this work, we assume throughout that $\cD$ runs in polynomial time.

\begin{assumption} \label{as:decoder-polytime}
The decoder $\cD$ runs in polynomial time, with respect to the size of the input $X$.
\end{assumption}

Unless stated otherwise, we work under this assumption throughout this paper. 
As we remark in \Cref{sec:conclusion}, it is possible to make other (e.g., more restrictive) assumptions on $\cD$. However, we find polynomial-time to give a clean abstraction, which allows us to both achieve meaningful upper and lower bounds.

\paragraph{Adjacency Notion.}
We focus on (hyper)graph algorithms in our paper, and we use \emph{(hyper)edge adjacency} notion throughout this work. Namely, the input dataset $X$ is always a hypergraph $G = (V, E)$. Two hypergraphs $G = (V, E)$ and $G' = (V, E')$ are \emph{neighbors}, denoted by $G \sim G'$, if they differ in the presence or absence of a single (hyper)edge, i.e., $|E \triangle E'| \le 1$.

We emphasize, however, that the general definition of the implicit representation model (Definition~\ref{def:implicit-rep}) is agnostic to the choice of neighborhood relation and can naturally be applied with other adjacency notions (such as node DP) and non-graph problems as well.

\subsubsection{Upper Bounds}

Working under this model, we give several new approximation algorithms for graph combinatorial optimization problems. At a high-level, our algorithms (specifically, the encoders) use FPT time to overcome inapproximability barriers.

Recall that an algorithm is said to be FPT\footnote{For more detail on parameterized algorithms and complexity, please refer to textbooks on the topic, e.g., \cite{DowneyF13}.} with respect to parameter $k$ if it runs in time $f(k) \cdot N^{O(1)}$ where $f$ can be any function and $N$ denotes the instance size. Throughout our work, we let $k$ be the so-called \emph{natural} parameter, i.e.,  the desired optimum of the solution.

All problems and our algorithms below share the following characterizations:
\begin{itemize}[nosep]
\item The problems are APX-hard, i.e.,  unless NP $\subseteq$ BPP, for some $\gamma > 0$, there is no polynomial-time randomized algorithm that achieves $(1 \pm \gamma)$-approximation.
\item For any $\gamma > 0$, our algorithms achieve $(1 \pm \gamma)$-approximation in FPT time for sufficiently large $\eps > 0$.
In other words, FPT time allows us to circumvent the aforementioned APX-hardness.
\end{itemize}

\paragraph{Vertex Cover and $d$-Hitting Set.}
We start by considering the classic Vertex Cover problem. Utilizing the same permutation-based implicit representation model as~\cite{GLMRT10}, we break the aforementioned factor-$2$ approximation barrier by allowing the encoder to run in FPT time. Specifically, we achieve a $(1 + \tO(1/\eps))$-approximation algorithm where the encoder runs in FPT time parameterized by the size of the optimal vertex cover. 

At a high level, this is achieved by first observing that the algorithm of \cite{GLMRT10} actually yields an optimal solution, but with an exponentially small probability. We boost this success probability by repeatedly running the algorithm exponentially many times to generate a large pool of candidate representations. We then use private hyperparameter tuning (i.e., private selection from private candidates)~\cite{LT19,Papernot022,GhaziKKKMZ25} to privately identify a representation that yields a near-optimal vertex cover. Furthermore, we observe that this algorithm naturally extends to the $d$-Hitting Set problem, which can be viewed as the Vertex Cover problem on $d$-uniform hypergraphs.


\paragraph{Color Coding for $H$-Packing.}
Next, we show how to adapt the classic FPT technique of Color Coding~\cite{AlonYZ94} into the DP setting. We demonstrate this on the subgraph packing problem: Given a fixed graph $H$, find a maximum vertex-disjoint copies of $H$ in $G$. A standard approach to solve this non-privately is to randomly color the vertices into $k$ colors. Then, try to find a copy of $H$ within each color. Each random color will succeed with probability $k^{-O(k)}$.

We adapt this algorithm for our setting by picking $k^{O(k)}$ many such colors and use the exponential  mechanism to select a color that yields a near-optimal packing. This yields an $\Paren{1 - \tO\Paren{\frac{\log k}{\eps}}}$-approximation for the problem.

\paragraph{Subset Representation for Induced Subgraphs.}
Finally, we consider the problem of finding maximum induced subgraphs satisfying hereditary properties on sparse graphs. This includes the problem of finding a maximum independent set. Due to the sparsity of the graph, the optimum is always large, i.e., $k \geq \Omega(|V|)$. Thus, the brute-force $2^{O(|V|)}$-time algorithm is an FPT solution in the non-private setting.

To achieve privacy, we introduce the ``subset'' representation. Here, the encoder uses the exponential mechanism~\cite{MT07} to privately outputs a small subset of vertices, and the decoder efficiently extracts a large induced subgraph from this set in polynomial time. This yields $(1 - O(1/\eps))$-approximation for the problem.

\subsubsection{Lower Bounds}

It is natural to ask whether one can improve the privacy-utility tradeoff of our algorithms. We make progress in this direction by proving both representation-independent and representation-dependent lower bounds.

\paragraph{Representation-Dependent Lower Bounds.}
First, we consider the representation-dependent setup, where the implicit representation format and the decoder algorithm are fixed to be the same as in the algorithms. In this setting, we show a tradeoff between the privacy parameter and the approximation ratio for all problems studied here. Similar to previous work~\cite{GLMRT10}, this is proved via simple packing arguments~\cite{HardtT10}.

\paragraph{Representation-Independent Lower Bounds.}
Given the lower bound above, the next question is whether it is possible to improve our algorithm further by inventing new implicit representations that is more amenable to privacy while still allows efficient decoding. 

We provide some initial negative results:
Assuming a non-uniform variant of the Gap Exponential Time Hypothesis~\cite{Dinur16,ManurangsiR17} (Gap-ETH with advice), we rule out any $\eps$-DP implicit representation scheme from achieving $(1 \pm \gamma)$ approximations, when $\eps, \gamma > 0$ are sufficiently small. 

The main challenge in proving such results is that the encoder's running time is not restricted (e.g., could be super-exponential). Thus, we cannot simply try to run it to contradict Gap-ETH directly. Instead, the core of our representation-independent proof relies on privacy of the representation to shows that there exists a small set (i.e., ``covering'') of representations such that a decoder must be successful on at least one of them on \emph{any input}. This ``covering'' can serve the non-uniform advice string, and we (approximately) solve 3SAT by running only the decoder (bypassing the encoder completely), ultimately contradicting Gap-ETH with advice.

The identification of covering above is highly reminiscent of lower bound proofs against instance compression (aka \emph{kernels})~\cite{FortnowS11,BodlaenderDFH09}. However, our proof is unique as the covering of representations exists due to the \emph{privacy} of the representations. To the best of our knowledge, this is a novel proof technique unlike any previous computational lower bound proofs in DP---which either rely on cryptographic assumptions (e.g., \cite{dwork2009complexity,UllmanV20,GhaziIK0M23,GhaziGKKKM26}) or simply re-run the entire algorithm multiple times until a desired solution is found (e.g., \cite{GeorgievH22,HillebrandMSV26-voting}).


\section{Preliminaries}
\label{sec:preliminaries}

For any positive integer $q$, Let $[q]$ denote $\{1, \dots, q\}$. We use $\ind{W}$ to denote the indicator variable of an event $W$. For any (hyper)graph $G = (V, E)$, we use $n$ and $m$ to denote the number of vertices and the number of (hyper)edges, respectively. The induced subgraph $G[S]$ for $S \subseteq V$ is the graph $(S, E[S])$ where $E[S] := \{e \in E \mid e \subseteq S\}$. The \emph{degree} of a vertex $v \in V$ with respect to a set $E$ is defined as $\deg_E(v) := |\{e \in E \mid v \in e\}|$.

We also recall the following probability distributions, which will be useful in our algorithms:
\begin{itemize}[nosep]
\item For $p \in [0, 1]$, the \emph{Bernoulli distribution}  $\Ber(p)$ is supported on $\{0, 1\}$ where its probability mass function at 1 is equal to $p$.
\item For $p \in [0, 1)$, the \emph{Geometric distribution}  $\Geo(p)$ is supported on $\Z_{\ge 0}$  and its probability mass function at $i$ is equal to $p^i(1 - p)$.
\item For $b > 0$, the \emph{Laplace distribution}  $\Lap(b)$ is supported on $\R$ and its probability density function at $x$ is $\frac{1}{2b} \cdot e^{-|x|/b}$.
\end{itemize}

\subsection{Differential Privacy} 

Recall also the definitions of DP (\Cref{def:dp}) and Implicit Representation Scheme (\Cref{def:implicit-rep}). For our representation-independent lower bounds (\Cref{sec:lower-bounds}), we assume that the representation $C$ is encoded as a bit string and that the decoder running time is at least the length of this bit string.

Below, we give some additional relevant definitions and background from the DP literature.

We use $\cX$ to denote the set of all input datasets.

\begin{definition}[Sensitivity] 
The \emph{sensitivity} of a function $f: \cX \to \R$ is defined as $\Delta(f) := \max_{X \sim X'} |f(X) - f(X')|$ where the maximum is over all neighboring datasets $X, X'$.
\end{definition}

A simple way to achieve DP is by adding Laplace noise calibrated to the sensitivity of the function:

\begin{definition}[Laplace Mechanism~\cite{dwork2006calibrating}] \label{def:laplace_mech}
The \emph{Laplace mechanism} on input $X$ for function $f: \cX \to \R$ outputs $f(X) + \eta$ where $\eta \sim \Lap(\Delta(f)/\eps)$. This mechanism is $\eps$-DP.
\end{definition}

We will also use the exponential mechanism \cite{MT07}, which we recall below.

\begin{definition}[Exponential Mechanism~\cite{MT07}] \label{def:em}
Given a candidate set $\cC$ and, for every $c \in \cC$, a scoring function $\scr(\cdot, c): \cX \to \R$. The \emph{exponential mechanism} $\EM_{\scr}(X)$ on input $X$ samples from $\cC$ where each $c \in \cC$ is picked  with probability $\propto \exp\Paren{\frac{\eps}{2}\cdot\scr(X, c)}$.
\end{definition}

\begin{theorem}[\cite{MT07}] \label{thm:em}
Suppose $\Delta(\scr(\cdot; c)) \leq 1$. Then, the $\EM_{\scr}$ is $\eps$-DP. 
Furthermore, if $\tc$ is the output from $\EM_{\scr}(X)$, then it satisfies $\E[\scr(X, \tc)] \geq \max_{c \in \cC} \scr(X, c) - O\Paren{\frac{\log |\cC|}{\eps}}$.
\end{theorem}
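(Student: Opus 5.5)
We prove the two claims of \Cref{thm:em} separately: privacy by bounding likelihood ratios pointwise, and utility by a tail bound on the score of the output.

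\textbf{Privacy.} Fix neighbors $X \sim X'$ and a candidate $c \in \cC$. The output probability is
\[
p_X(c) = \frac{\exp\Paren{\frac{\eps}{2}\scr(X,c)}}{\sum_{c'} \exp\Paren{\frac{\eps}{2}\scr(X,c')}}.
\]
Since $\Delta(\scr(\cdot,c)) \le 1$, we have $|\scr(X,c) - \scr(X',c)| \le 1$. Hence the numerator changes by a factor of at most $e^{\eps/2}$ when passing from $X$ to $X'$. The same sensitivity bound holds for each $c'$, so the normalizing sum also changes by a factor of at most $e^{\eps/2}$. Together these give $p_X(c) \le e^{\eps} p_{X'}(c)$. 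Summing over $c \in \mathcal{S}$ extends this to arbitrary output sets, which is \Cref{def:dp}. Note that the $\eps/2$ in the exponent is exactly what absorbs the two factors.

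\textbf{Utility.} Let $\opt = \max_c \scr(X,c)$ and fix $t > 0$. Call $c$ bad if $\scr(X,c) \le \opt - t$. Each bad candidate has unnormalized weight at most $\exp\Paren{\frac{\eps}{2}(\opt - t)}$, and there are at most $|\cC|$ of them. The normalizer is at least the weight of the optimizer, namely $\exp\Paren{\frac{\eps}{2}\opt}$. Therefore
\[
\Pr\left[\scr(X,\tc) \le \opt - t\right] \le |\cC|\, e^{-\eps t/2}.
\]
We convert this tail bound into a bound in expectation. Write $D = \opt - \scr(X,\tc) \ge 0$ and $t_0 = \frac{2\ln|\cC|}{\eps}$. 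Then
\[
\E[D] = \int_0^\infty \Pr[D > t]\,dt \le t_0 + \int_{t_0}^\infty |\cC| e^{-\eps t/2}\,dt = t_0 + \frac{2}{\eps}.
\]
This equals $O\Paren{\frac{\log|\cC|}{\eps}}$ when $|\cC| \ge 2$. The case $|\cC| = 1$ is trivial, since then $D = 0$. This gives the stated guarantee.

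The only mildly delicate step is the expectation bound. We handle it by splitting the integral at $t_0$: below $t_0$ we use the trivial bound of $1$ on the probability, and above $t_0$ we integrate the exponential tail.
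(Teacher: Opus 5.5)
Your proof is correct. The paper does not prove this theorem itself; it cites it from \cite{MT07}, and your argument is the standard proof of that result: a pointwise likelihood-ratio bound of $e^{\eps/2}\cdot e^{\eps/2}$ for privacy, and the tail bound $\Pr[\scr(X,\tc) \le \max_{c} \scr(X,c) - t] \le |\cC| e^{-\eps t/2}$ integrated to an expected loss of at most $2(\ln|\cC|+1)/\eps$.
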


\subsection{(Parameterized) Approximation Algorithms}
We evaluate the performance of an implicit representation scheme based on the expected objective value of the final decoded solution. Let $\opt_{\Pi}(X)$ denote the optimal objective value of the problem $\Pi$ on instance $X$, and let $\val_{\Pi}(X, S)$ denote the objective value of a valid solution $S$ on $X$. When $\Pi$ is clear from context, we drop the subscript $\Pi$ for readability. Throughout this work, we assume that $\val_{\Pi}(X, S)$ is a non-negative integer bounded by $N^{O(1)}$ where $N$ is the size of the input $X$.

Throughout this work, we consider the parameterized algorithms where the parameter is always the optimum. Here we allow the encoder $\cE$ to take in\footnote{We allow the encoder to take $k$ as input for convenience. In all problems studied in this work, the optimum has low sensitivity. Thus, it can be computed (with low error) using, e.g., the Laplace mechanism (\Cref{def:laplace_mech}).} the parameter $k$, in addition to the input dataset $X$. We only require the solution when the given parameter $k$ is an under-estimation (resp., over-estimation) for the optimum in maximization (resp., minimization) problems. This can be formalized as follows\footnote{For more background on parameterized approximation algorithms and hardness, see, e.g.,  \cite{FeldmannSLM20}.}. The expectation is taken over the internal randomness of the encoder $\mathcal{E}$.

\begin{definition}[Parameterized Approximation Algorithm]
\label{def:approx-max}
For a maximization (resp.,  minimization) problem $\Pi$, an implicit representation scheme $(\cE, \cD)$ is an $\alpha$-approximation algorithm for $\alpha \leq 1$ (resp., $\alpha \geq 1$) if the following holds: For any $k \leq \opt(X)$ (resp.,  $k \geq \opt(X)$), we have $\E_{S \sim \cD(X, \cE(X; k))}[\val(X, S)] \geq \alpha \cdot k$ (resp., $\leq \alpha \cdot k$). 

If additionally $\cE(X; k)$ runs in $f(k) \cdot N^{O(1)}$ time, where $f$ is some function and $N$ is the size of the input $X$, then we say that $(\cE, \cD)$ is an $\alpha$-FPT-approximation algorithm.
\end{definition}
Recall that, unless stated otherwise, we assume that the decoder $\cD$ runs in polynomial time (\Cref{as:decoder-polytime}). 

\paragraph{Gap Problems.} For our lower bound proofs, it will be more convenient to work with ``gap problems'', which can be defined as follows.
\begin{definition}[Gap Problem]
For a maximization (resp. minimization) problem $\Pi$, the \emph{$\rho$-gap-$\Pi$} is to, given $(X, k)$ distinguish between the following two cases:
\begin{itemize}[nosep]
\item (Yes) $\opt(X) = k$
\item (No) $\opt(X) < \rho \cdot k$ (resp. $\opt(X) > \rho \cdot k$)
\end{itemize}
We say that a randomized algorithm $\alg$ solves the gap problem with one-sided error  
if the following holds for some constant $\xi > 0$:
\begin{itemize}[nosep]
\item If $\opt(X) = k$, $\Pr[\alg(X) = YES] \geq \xi$.
\item If $\opt(X) < \rho \cdot k$ (resp. $\opt(X) > \rho \cdot k$), $\Pr[\alg(X) = NO] = 1$.
\end{itemize}
We say that a deterministic algorithm solves the gap problem when similar guarantees hold (where there is now no error).
\end{definition}

It is well known that an approximation algorithm can be easily turned to an algorithm for the gap problem. Below we formulate this statement for implicit representation schemes. While the proof is standard, we include it here for completeness.

\begin{observation} \label{obs:exp-to-prob-approx}
Let $\Pi$ be a maximization (resp. minimization) problem, and let $\alpha, \rho > 0$ be constants such that $\alpha > \rho$ (resp. $\alpha < \rho$). If there is an $\eps$-DP implicit representation scheme with a subexponential-time decoder that achieves $\alpha$-approximation for $\Pi$, then there is an $\eps$-DP implicit representation scheme with a subexponential-time decoder that solves the $\rho$-gap-$\Pi$ problem with one-sided error. 
\end{observation}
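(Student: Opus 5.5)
We keep the same encoder $\cE$ and wrap the decoder $\cD$. Since the new decoder $\cD'$ is only a post-processing of $(X, \cE(X;k))$ together with the public input $(X,k)$, the scheme stays $\eps$-DP. Its running time exceeds that of $\cD$ only by computing $\val$, which is polynomial, so it remains subexponential. We treat the maximization case; the minimization case is symmetric.

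\emph{Construction.} On input $(X,k)$ and representation $C$, the decoder $\cD'$ first computes $S=\cD(X,C)$ and then $\val(X,S)$. It outputs YES if $\val(X,S)\ge \rho k$, and NO otherwise (for minimization, YES if $\val(X,S)\le\rho k$).

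\emph{Soundness.} Suppose we are in the No case, so $\opt(X)<\rho k$. Since $\cD$ always outputs a valid solution, $\val(X,S)\le\opt(X)<\rho k$ for every $C$. Hence $\cD'$ outputs NO with probability $1$.

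\emph{Completeness.} Suppose we are in the Yes case, so $k=\opt(X)$; in particular $k\le\opt(X)$. The approximation guarantee (\Cref{def:approx-max}) then gives $\E[\val(X,S)]\ge\alpha k$. Let $p=\Pr[\val(X,S)\ge\rho k]$. Since $\val(X,S)\le\opt(X)=k$,
\[
\alpha k\le \E[\val(X,S)]\le p\cdot k+(1-p)\rho k .
\]
Rearranging yields $p\ge(\alpha-\rho)/(1-\rho)$ when $\rho<1$. If instead $\rho\ge 1$, then $\alpha>\rho\ge1$, which is impossible for a maximization problem, since $\alpha\le1$ there. This reverse-Markov step is the only place where care is needed: we must use that $\val$ is bounded above by $\opt = k$. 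For $k=0$ the Yes case is accepted trivially. Thus $\xi=(\alpha-\rho)/(1-\rho)>0$ works, and it is a constant.

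\emph{Minimization case.} Here $\val\ge0$ and we have $\E[\val]\le\alpha k$ with $\alpha<\rho$. Markov's inequality gives $\Pr[\val>\rho k]\le\alpha/\rho$, so $\xi=1-\alpha/\rho>0$. Soundness holds because $\val(X,S)\ge\opt(X)>\rho k$.

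If the encoder is only defined for the parameter promise, this is already covered. In the Yes case $k=\opt(X)$, so the guarantee applies. In the No case the encoder may output anything, and soundness holds regardless of $C$.
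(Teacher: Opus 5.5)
Your proposal is correct and follows the paper's proof essentially verbatim. It uses the same encoder, a decoder that thresholds $\val(X,S)$ at $\rho k$, soundness from the validity of the decoded solution, a reverse-Markov bound using $\val(X,S)\le k$ (your $(\alpha-\rho)/(1-\rho)$ is just a slightly sharper form of the paper's $\alpha-\rho$), and plain Markov for minimization. One small correction: privacy holds simply because the encoder $\cE$ is unchanged and only the encoder must be $\eps$-DP. It does not follow from post-processing, since $\cD'$ reads $X$, which is the private input and not public.
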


\begin{proof}
Let us start by considering the maximization problem $\Pi$. Suppose that there is a $\eps$-DP $\alpha$-approximation algorithm $(\cE, \cD)$ for $\Pi$. Our algorithm $(\cE, \cD')$ for $\rho$-gap-$\Pi$ keeps the encoder $\cE$ unchanged. The decoder simply is now as follows: First, run the original decoder $\cD$ to obtain a solution $S$. It then checks whether $\val(X, S) \geq \rho \cdot k$, return YES. Otherwise, return NO.

Note that the running time of the decoder increases by at most $n^{O(1)}$.

To see that this solves the gap problem, consider the two cases:
\begin{itemize}[nosep]
\item If $\opt(X) = k$, from the approximation guarantee of $(\cE, \cD)$, we have $\E[\val(X, S)] \geq \alpha \cdot k$. Meanwhile, $\val(X, S)$ is a number between $[0, k]$. Thus, by applying Markov's inequality to $k - \val(X, S)$, we have $\Pr[\val(X, S) \geq \rho \cdot k] \geq \alpha - \rho$. In other words, $(\cE, \cD')$ returns YES with probability at least $\alpha - \rho$.
\item If $\opt(X) < \rho \cdot k$, then by definition there is no solution $S$ with $\val(X, S) \geq \rho \cdot k$. Thus, the algorithm always return NO.
\end{itemize}

The minimization case is similar: $\cD'$ returns YES iff $\val(X, S) \leq \rho \cdot k$. If $\opt(X) = k$, by Markov's inequality, the algorithm returns YES with probability at least $1 - \alpha / \rho$. Meanwhile, in the case $\opt(X) > \rho \cdot k$, we always return NO.
\end{proof}

\subsection{Problem Definitions}
We study several classical combinatorial optimization problems on hypergraphs in this framework. As mentioned in the introduction, the input to all these problems are (hyper)graphs and we consider two (hyper)graphs to be \emph{neighbors} (for purpose of DP) if they differ in a single (hyper)edge. Below, we formally define the objectives for all problems we will study.

\paragraph{$d$-Hitting Set (Hypergraph Vertex Cover)}: Given a hypergraph $G = (V, E)$ where each $e \in E$ has size at most $d$, find a minimum-size subset $S \subseteq V$ such that $S \cap e \neq \emptyset$ for all $e \in E$.
\begin{itemize}[nosep]
\item \textbf{Vertex Cover} is the special case of $d$-Hitting Set where $d = 2$.
\end{itemize}

The Vertex Cover problem is among the first problems shown to be in FPT \cite{BussG93} and is often used as an illustrative example (e.g.,  \cite{DowneyF13,CyganFKLMPPS15}). As alluded to earlier, it is also well studied from the hardness of approximation standpoint; the problem is known to be UGC-hard to approximate to within a factor of $2 - \eps$~\cite{KhotR08} and NP-hard to approximate to within a factor of $\sqrt{2} - \eps$~\cite{KhotMS17,DinurKKMS18a}. For the hypergraph version, the known hardness factors become $d - \eps$ and $d - 1 - \eps$, respectively~\cite{KhotR08,DinurGKR05}.

\paragraph{Maximum Induced Subgraph (with Hereditary Property):} Given a graph $G=(V,E)$ and a hereditary graph property\footnote{A graph property is said to be \emph{hereditary} if it is closed under vertex deletion.} $\Gamma$, find a maximum-size $S \subseteq V$ such that $G[S]$ satisfies $\Gamma$.
\begin{itemize}[nosep]
\item \textbf{Independent Set} is a special case of Maximum Induced Subgraph (with Hereditary Property) where $\Gamma$ is the property of being an independent set.
\end{itemize}

For this problem, our algorithm requires that the graph $G$ is \emph{$\Delta$-degenerate} (i.e., for every non-empty $S \subseteq V$, $G[S]$ has at least one vertex with degree at most $\Delta$) for some constant $\Delta \in \N$. Such a sparsity assumption is necessary due to strong FPT-inapproximability results for the problem in the general case \cite{ChalermsookCKLM20}, which rules out any non-trivial $o(k)$-FPT-approximation algorithm.

We note that the problem remains NP-hard in this setting. Specifically, Independent Set is APX-hard even on degree-3 graphs~\cite{AlimontiK00}.

\paragraph{$H$-Packing:} Given a hypergraph $G=(V, E)$ and a fixed hypergraph $H$, find a maximum-size collection of vertex-disjoint sub(hyper)graphs in $G$, each isomorphic to $H$.

\paragraph{Triangle-Transversal:} Given an undirected graph $G=(V,E)$, find a minimum-size subset $S \subseteq V$ to delete such that $G[V \setminus S]$ contains no triangles.

Again, $H$-Packing/Transversal is well studied in the approximation algorithms  literature, and many algorithms and hardness results are known; see \cite{GuruswamiL17} and references therein. In particular, both Triangle-Packing and Triangle-Transversal problems are known to be APX-hard even on degree-4 graphs~\cite{Kann91,RooijNB13,GuptaLLMW19,GroshausHKNP11}.

Both problems are also known to be in FPT, and algorithms  for specific $H$ such as triangles, or paths are known (e.g.,~\cite{GrammGHN04,FernauR09,BjorklundHKK17}).



We note that for the Triangle-Transversal problem, we will only prove representation-dependent lower bounds for the problem. For the other problems, we will give algorithms together with representation-dependent and representation-independent lower bounds. 

\subsection{Gap-ETH with Advice}

For our representation-independent lower bounds, we will use a non-uniform variant of the Gap Exponential Time Hypothesis (Gap-ETH) \cite{Dinur16, ManurangsiR17}. To formalize this, we start by defining the classical Gap-3SAT problem.

\begin{definition}[Bounded Occurrence Gap-3SAT] \label{def:gap-3sat}
An instance of 3SAT($D$) consists of a 3-CNF formula $\phi$ on $N$ variables where each variable appears in at most $D$ clauses. For a constant $\gamma \in (0, 1)$, the $\gamma$-Gap-3SAT($D$) problem asks, given such an instance $\phi$, to distinguish between
\begin{itemize}[nosep]
\item (Yes) There exists an assignment that satisfies all clauses in $\phi$.
\item (No) Every assignment satisfies at most $\gamma$ fraction of clauses in $\phi$.
\end{itemize}
\end{definition}

The Gap-ETH with Advice is a strengthening of the standard Gap-ETH~\cite{Dinur16,ManurangsiR17}. To the best of our knowledge, this assumption has not been studied before in literature. However, given that advice strings usually do not (significantly) help with solving NP-hard problems, we consider Gap-ETH with Advice to be nearly as plausible as the standard Gap-ETH.

\begin{assumption}[Gap-ETH with Advice]
\label{assump:gap-eth-advice}
There exist constants $D \ge 3$, $\gamma \in (0, 1)$, and $c_0 > 0$ such that no $O(2^{c_0 N})$-time algorithm, even when provided with a non-uniform advice string\footnote{As is standard in the area, the advice string can only depend on $N$ and nothing else.} of size $O(2^{c_0 N})$, can solve $\gamma$-Gap-3SAT($D$) on $N$ variables.
\end{assumption}

\section{Algorithms}
\label{sec:alg}
\subsection{Vertex Cover and $d$-Hitting Set}
\label{sec:vc}
In this section, we present an algorithm for the $d$-Hitting Set problem that achieves the following: 

\begin{theorem}
\label{thm:vc-main}
There is an $\eps$-DP $(1 + O(\frac{\log(d + 1/\eps)}{\eps}))$-FPT-approximation implicit representation scheme for $d$-Hitting Set with running time $(d + 1/\eps)^{O(k)} \cdot n^{O(1)}$.
\end{theorem}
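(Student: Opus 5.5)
Our plan follows the outline in the introduction: run (a hypergraph version of) the permutation-based encoder of \cite{GLMRT10} exponentially many times to build a pool of candidate permutations, then privately select a good one. The representation is a permutation $\pi$ of $V$. The decoder scans the hyperedges and, for each hyperedge not yet hit, adds its $\pi$-earliest vertex. This runs in polynomial time and always outputs a valid hitting set.

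\textbf{Step 1 (a single private permutation that is optimal with non-negligible probability).} We generalize the \cite{GLMRT10} encoder to hypergraphs. Vertices are chosen one at a time, each remaining vertex $v$ with probability proportional to $d_{\mathrm{rem}}(v)+w$, where $d_{\mathrm{rem}}(v)$ is the number of not-yet-covered hyperedges containing $v$ and $w \approx n/\eps$ (suitably scaled with $d$). Hyperedges containing a chosen vertex are then removed. The \cite{GLMRT10} analysis should carry over with $d$ in place of $2$ and give $\eps$-DP for the whole permutation: adding one hyperedge changes only the weights at the steps before it is covered, and the product of these ratios telescopes to $e^{O(\eps)}$. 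For utility, fix an optimal hitting set $S^*$ with $|S^*|\le k$. A vertex outside $S^*$ that has remaining degree at least one has all its remaining hyperedges touching $S^*$. Hence, while hyperedges remain, the total weight on $S^*$ is at least roughly a $1/d$ fraction of the ``degree weight''. The decoder's output is contained in $S^*$ whenever each vertex with positive remaining degree that gets picked lies in $S^*$. We would lower-bound the probability of this event by $(d+1/\eps)^{-O(k)}$. Each of the at most $k$ ``charged'' picks succeeds with probability at least roughly $\frac{1}{d}\cdot\frac{1}{1+1/\eps}$. Picks of isolated vertices are irrelevant, since they are never added by the decoder. The rough bound is therefore
\[
(d(1+1/\eps))^{-k} = (d+1/\eps)^{-O(k)}.
\]

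\textbf{Step 2 (boosting and private selection).} We run the Step-1 encoder $T=(d+1/\eps)^{O(k)}$ times. For each output $\pi_i$ we let $\scr(G,\pi_i)=-|\cD(G,\pi_i)|$. This score has sensitivity at most $1$ for fixed $\pi_i$, since adding a hyperedge adds at most one vertex to the decoded cover. To select among candidates that were themselves produced privately, we use private hyperparameter tuning \cite{LT19,Papernot022} rather than the plain exponential mechanism of \Cref{thm:em}, whose guarantee assumes a data-independent candidate set. Concretely, we draw a random number of repetitions, e.g.\ geometric with mean $T$, and release the permutation of best noisy score, with Laplace noise of scale $O(1/\eps)$ added to each score via \Cref{def:laplace_mech}. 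By the Papernot--Steinke theorem this is $O(\eps)$-DP; rescaling $\eps$ by a constant gives $\eps$-DP. With constant probability, some candidate has cover size exactly $\opt\le k$. The maximum over about $T$ Laplace noises is $O(\log T/\eps)=O(k\log(d+1/\eps)/\eps)$. Hence the expected decoded size is $k+O(k\log(d+1/\eps)/\eps)$, i.e.\ an approximation ratio of $1+O(\log(d+1/\eps)/\eps)$. The running time is $T\cdot n^{O(1)}$, as claimed.

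\textbf{Main obstacle.} The delicate step is Step 1. We must prove the $d$-uniform privacy analysis of the weighted sampling encoder, choosing the offset $w$ so that DP costs only $O(\eps)$. Simultaneously, the per-step probability of picking inside $S^*$ must stay at least $\Omega(1/(d(1+1/\eps)))$ uniformly, even though the offset $w$ inflates the weight of uncovered non-optimal vertices. If the offset term dominates, we would instead argue per step that the chosen vertex is either isolated (harmless) or lies in $S^*$ with that probability, conditioning only on steps that pick non-isolated vertices. A secondary concern is making the Papernot--Steinke guarantee produce an additive $O(\log T/\eps)$ loss in expectation rather than only with high probability. We would handle this by bounding the worst case by $n$ and absorbing the failure probability.
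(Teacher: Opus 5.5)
\textbf{The gap is in Step~1.} It is exactly the tension you list as the main obstacle, and the proposal leaves it unresolved: no fixed offset $w$ gives both $O(\eps)$-DP and success probability $(d+1/\eps)^{-O(k)}$.

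\emph{Privacy side.} Compare $E=\emptyset$ with $E'=\{e\}$ on a permutation that puts the vertices of $e$ last. The likelihood ratio is $\prod_{j=d+1}^{n}\bigl(1+\frac{d}{jw}\bigr)\approx (n/d)^{d/w}$. So the per-step factors do not telescope, and privacy forces $w=\Omega\bigl(\frac{d}{\eps}\log\frac{n}{d}\bigr)$. In particular, no $n$-independent offset is $O(\eps)$-DP, yet that is what a per-pick bound like your $\frac{1}{d}\cdot\frac{1}{1+1/\eps}$ needs. For $w=\Theta(d/\eps)$ the loss is $\Theta(\eps\log(n/d))$.

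\emph{Utility side.} Take the star $K_{1,n-1}$, so $k=1$. Every vertex is relevant at the first step, and any leaf placed before the center is the $\pi$-earliest vertex of its edge. The center is picked first with probability $\frac{(n-1)+w}{2(n-1)+nw}\le\frac{1}{w}+\frac{1}{n}$. For your $w\approx n/\eps$ this is about $\frac{1+\eps}{n}$, not a function of $k,d,\eps$. In general one can only guarantee that a relevant pick lies in $S^*$ with probability $\frac{1}{d(1+w)}$. Your fallback of conditioning on non-isolated picks does not help: with a dominant offset, a relevant pick is nearly uniform over as many as $\Theta(n)$ bad vertices.

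\emph{Missing idea.} You need the step-dependent offset that \cite{GLMRT10} and the paper use, $w_i=\frac{4d}{\eps_0}\sqrt{n/(n-i+1)}$. With it, the privacy sum $\sum_i\frac{d}{(n-i+1)w_i}=\frac{\eps_0}{4}\sum_{j=1}^{n}\frac{1}{\sqrt{nj}}\le\frac{\eps_0}{2}$ converges, while $w_i\le 6d/\eps_0$ during the first $\lfloor n/2\rfloor$ steps. The utility analysis must then confine all $k$ charged picks to that first half. When $\ell$ vertices of $S^*$ remain, the probability of picking a relevant vertex is at least $\ell/(n-i+1)$, as under a uniform permutation. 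Hence all of $S^*$ is drawn in the first half with probability $2^{-O(k)}$ when $n\ge 3k$. Each relevant pick there is good with probability at least $q=\frac{1}{d(1+6d/\eps_0)}$. The paper combines the two by backward induction on $f(i,\ell)=q^{\ell}\binom{n-i+1-\lceil n/2\rceil}{\ell}/\binom{n-i+1}{\ell}$.

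\textbf{Step~2 also needs repair.} First, your sensitivity claim is false for the decoder you define (add the earliest vertex of each not-yet-hit hyperedge, in scan order). Scan $e=\{v,y\}$, $g=\{u,v\}$, $f_j=\{x_j,u\}$ for $j\in[r]$, in this order, with $\pi=(x_1,\dots,x_r,u,v,y)$. The output is $\{u\}$ without $e$ but $\{v,x_1,\dots,x_r\}$ with $e$. Use $S_\pi(E)=\{\min_\pi(e): e\in E\}$ instead; adding a hyperedge adds at most one vertex to it.

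Second, best-of-$K$ with geometric $K$ bounds the running time only in expectation, and its loss is $O(\log K/\eps)$. Absorbing the failure event with the trivial bound $n$ needs failure probability $O(k/n)$. That requires mean at least of order $n/(kp)$ with $p=(d+1/\eps)^{-O(k)}$. Your bound then becomes $O\bigl((k\log(d+1/\eps)+\log(n/k))/\eps\bigr)$, which is too weak when $k=o(\log n)$.

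The paper avoids both issues differently. It runs a fixed number $T$ of rounds with random dropping (the Generalized AboveThreshold of \cite{GhaziKKKMZ25}) and returns the first $\pi_i$ with $-|S_{\pi_i}(E)|+\eta_i\ge -k$. Per round, the probability of accepting a candidate with excess more than $\Delta$, divided by that of accepting an optimal one, is at most $p^{-1}e^{-\Delta\eps/6}$, independent of $T$. So $T\approx n\log n/p$ makes the default output occur with probability $O(1/n)$ at no cost in utility.
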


In fact, our implicit representation is the same as that of~\cite{GLMRT10}.  The approximation ratio can be viewed as a direct improvement over their work, albeit at the cost of FPT running time (instead of polynomial time). The implicit representation we use is defined below.

\begin{definition}[Permutation Representation] \label{def:perm-rep}
In the \emph{permutation representation}, the encoder outputs a permutation $\pi$ of the vertex set $V$ as the implicit representation. The decoder $\cD(G, \pi)$ operates by selecting the earliest element in $\pi$ for each hyperedge $e \in E$, forming the hitting set $S_\pi(E) \coloneqq \{\min_\pi(e) \mid e \in E\}$. 
\end{definition}

The algorithm extends the \textsc{DP-FPT-VC} generator of~\cite{GLMRT10} to hypergraphs and uses the Generalized AboveThreshold mechanism \cite{GhaziKKKMZ25} to select a candidate.

\subsubsection{A Low Probability Algorithm}

As a first step of our algorithm, we show that the noisy degree sampling algorithm of~\cite{GLMRT10} natively outputs the optimal hitting set, albeit with an exponentially small probability. 

Their algorithm works as follows: at each iteration $i = 1, \dots, n$, we sample a remaining vertex based on its current (noised) degree using a noise parameter $w_i$. This vertex is appended to the permutation; any hit hyperedges are removed. The full algorithm is presented in \Cref{alg:dp-fpt-dhs-gen}.

\begin{algorithm}[H]
\caption{Noisy Degree Sampling (\textsc{DP-FPT-dHS-Gen})~\cite{GLMRT10}}
\label{alg:dp-fpt-dhs-gen}
\begin{algorithmic}[1]
\REQUIRE Hypergraph $G = (V, E)$ where each edge has size at most $d$, privacy parameter $\eps_0 > 0$.
\STATE Let $V_1 \gets V$ and $E_1 \gets E$.
\FOR{$i = 1, \dots, n$}
\STATE Compute the  noise term:
$w_i \gets \frac{4d}{\eps_0} \sqrt{\frac{n}{n-i+1}}$
\STATE Pick a vertex $c^*_i \in V_i$ with probability proportional to its degree in the remaining hypergraph plus the noise term $w_i$:
\[
\Pr[c^*_i = v] = \frac{\deg_{E_i}(v) + w_i}{\sum_{u \in V_i} \deg_{E_i}(u) + (n-i+1)w_i}.
\]
\STATE Update $V_{i+1} \gets V_i \setminus \{c^*_i\}$ and $E_{i+1} \gets \{e \in E_i \mid c^*_i \notin e\}$.
\ENDFOR
\RETURN permutation $\pi = (c^*_1, \dots, c^*_n)$.
\end{algorithmic}
\end{algorithm}

Let $S^*$ be any fixed optimal hitting set of size $k$. 
If there were no noise (i.e., $w_i = 0$), then it is relatively simple to see that the vertices in $S^*$ will be picked as the first $k$ vertices with probability at least $d^{-k}$. When this happens, we of course have that $S_{\pi}(E) = S^*$. Below, we conduct a refined utility analysis to show that even with the increasing noise added, we still have $S_{\pi}(E) = S^*$ with a pure parameterized success probability.

\begin{lemma}
\label{lem:dhs-gen}
\textup{\textsc{DP-FPT-dHS-Gen}} is $\eps_0$-DP. Furthermore, for any optimal hitting set $S^*$ of size $k$, it outputs a permutation $\pi$ such that $S_\pi(E) = S^*$ with probability at least $(d + 1/\eps_0)^{-O(k)}$.
\end{lemma}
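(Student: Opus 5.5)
The lemma has two parts, privacy and utility, and I will handle them separately.

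\textbf{Privacy.} I will not reprove this from scratch. \textsc{DP-FPT-dHS-Gen} is the hypergraph extension of the \textsc{DP-FPT-VC} generator of~\cite{GLMRT10}, and the plan is to adapt their argument. Fix neighbors $E' = E \cup \{e_0\}$ and a permutation $\pi$. Write $\Pr[\pi \mid E]$ as a product over iterations $i$ of the ratio $(\deg_{E_i}(c_i)+w_i)/(\sum_{u} \deg_{E_i}(u) + (n-i+1)w_i)$.

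Let $t$ be the first index with $c_t \in e_0$. Before step $t$, $e_0$ is alive in $E'$, and at most $d$ vertex degrees differ by one. At step $t$ and afterwards, the remaining edge sets coincide. So the analysis has three pieces:
\begin{itemize}
\item The numerators differ only at step $t$. That single ratio is at most $1 + 1/w_t$ in one direction. The numerators of non-$e_0$ vertices are identical.
\item In the denominators for $i < t$, the sum of degrees differs by $|e_0| \le d$. This gives a factor $\prod_{i<t}\bigl(1 + d/((n-i+1)w_i)\bigr)$.
\item The step-$t$ numerator ratio $1 + 1/w_t$ is at most $e^{\eps_0/(4d)}$, using $w_i \ge 4d/\eps_0$.
\end{itemize}
Substituting $w_i = \frac{4d}{\eps_0}\sqrt{n/(n-i+1)}$ turns the exponent of the denominator product into $\frac{\eps_0}{4}\sum_{j}\frac{1}{\sqrt{n j}} \le \frac{\eps_0}{2}$. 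The total is therefore at most $e^{\eps_0}$, in both directions.

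\textbf{Utility.} I will define the good event that the first $k$ picks are exactly the vertices of $S^*$, in some order. When it occurs, every hyperedge has an endpoint in $S^*$ before any other vertex, so $S_\pi(E) = S^*$.

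Consider step $i \le k$, conditioned on $c_1,\dots,c_{i-1} \in S^*$. The remaining edges $E_i$ are hit by $S^*_i := S^* \setminus \{c_1,\dots,c_{i-1}\}$. So $\sum_{v \in S^*_i}\deg_{E_i}(v) \ge |E_i| \ge \frac1d \sum_u \deg_{E_i}(u)$. The probability of picking from $S^*_i$ is then at least
\[
\frac{\frac1d D_i + (k-i+1)w_i}{D_i + (n-i+1)w_i}, \qquad D_i := \sum_u \deg_{E_i}(u).
\]
This splits into two cases:
\begin{itemize}
\item If $D_i \ge (n-i+1)w_i$, the ratio is at least $\frac{1}{2d}$.
\item Otherwise, the ratio is at least $\frac{(k-i+1)w_i}{2(n-i+1)w_i} = \frac{k-i+1}{2(n-i+1)}$.
\end{itemize}
This second bound is tiny when $n \gg k$, so the naive bound fails. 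This is the main obstacle.

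\textbf{Handling the obstacle.} The fix will use the structure of the problem. If $D_i < (n-i+1)w_i$ in the noise-dominated regime, then the number of remaining edges is small. I would first try a cruder parametrization. The denominator sum satisfies $D_i \le d|E_i| \le d\cdot\Delta_{\max}(E_i)\, k$. We do not need every pick to be from $S^*$; it suffices that all remaining edges get hit by $S^*$ vertices first. So I will refine the good event to: "every vertex picked before all edges are covered lies in $S^*$, and these are at most $k$ picks." Vertices of $S^*$ that are never needed can appear later; their absence from $S_\pi(E)$ only helps, and $S_\pi(E) \subseteq S^*$ with $|S^*|$ optimal forces equality.

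Within this refined event, the key point is that a non-$S^*$ vertex $v$ with no live edges does no harm: it has degree zero, lies in no live edge, and is never chosen as $\min_\pi(e)$. So a bad pick is only one with $\deg_{E_i}(v) \ge 1$, and the relevant probability at step $i$ is
\[
\frac{\sum_{v\in S^*_i}(\deg+w_i)}{\sum_{v \in S^*_i}(\deg + w_i) + \sum_{v\notin S^*,\,\deg\ge1}(\deg+w_i)}.
\]
The number of positive-degree non-$S^*$ vertices is at most $(d-1)|E_i|$. I expect to bound $|E_i|$ using $\deg \ge 1$ and a conditioning argument: reduce, in the standard FPT way, to the case where every vertex in $S^*$ has degree at most roughly $k$-related bounds. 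Then each useful step succeeds with probability at least $\Omega\bigl(1/(d(1+w_i))\bigr)$, with $w_i \le \frac{4d}{\eps_0}\sqrt{n/(n-k)}$.

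For $n \ge 2k$, this gives $(d+1/\eps_0)^{-O(1)}$ per step, hence $(d+1/\eps_0)^{-O(k)}$ over at most $k$ useful steps. Steps that pick zero-degree vertices are harmless and are not counted. Making this last counting rigorous, in particular controlling the many harmless zero-degree picks, is the delicate part.
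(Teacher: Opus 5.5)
There is a genuine gap in the utility part; the privacy part is fine.

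\textbf{What is correct.} Your privacy argument matches the paper's. (The denominators also differ at step $t$ itself, since $e_0$ is still alive when $c_t$ is drawn, but your sum over all $j$ already covers that.) Your refined event is also right: no positive-degree vertex outside $S^*$ is chosen before all of $S^*$ has been chosen. This is exactly the paper's $\cE^*$.

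\textbf{The per-step bound needs no reduction.} You do not need to control $|E_i|$. The ``standard FPT reduction'' you invoke is not available anyway: you are analyzing a fixed algorithm on an arbitrary input, with no preprocessing. The conditional ratio is scale-free. Its numerator is at least $\sum_{v\in S_i^*}\deg_{E_i}(v)\ge |E_i|$. Every relevant vertex has degree at least one. For the remaining vertices of $S^*$ this uses minimality of $S^*$: on the event, only vertices of $S^*$ have removed edges. So the denominator is at most $(1+w_i)\sum_{u\text{ relevant}}\deg_{E_i}(u)\le (1+w_i)\,d\,|E_i|$. Hence a relevant pick is good with conditional probability at least $1/(d(1+w_i))$.

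\textbf{The genuine gap is timing.} You flag this as ``delicate'' but then settle it with an invalid bound. The estimate $w_i\le \frac{4d}{\eps_0}\sqrt{n/(n-k)}$ holds only for $i\le k+1$. On your event, however, the good picks are interleaved with arbitrarily many zero-degree picks, which is the typical behavior in the noise-dominated regime. So they may occur at steps with $n-i+1=O(1)$, where $w_i=\Theta(d\sqrt{n}/\eps_0)$. Because $w_i$ depends on the absolute step index, not counting the harmless steps does not help. The product of your per-step bounds can be $n^{-\Omega(k)}$, which is not FPT. For example, take a star with $L$ leaves whose center is the unique vertex of $S^*$. Once $w_i\gg L$, the center precedes all leaves with probability only about $1/(L+1)$.

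\textbf{The missing idea.} You need to show that, with probability $2^{-O(k)}$, all of $S^*$ is chosen within the first $M=\lfloor n/2\rfloor$ steps, where $w_i\le 6d/\eps_0$. The paper gets this from $r_i\ge \ell/(n-i+1)$. This holds because each irrelevant vertex has weight exactly $w_i$ and each of the $\ell$ remaining good vertices has weight at least $w_i$. So relevant steps happen at least as early as the $\ell$ marked items of a uniformly random permutation. The paper combines this with $g_i\ge q=1/(d(1+6d/\eps_0))$ through a backward induction on $(i,\ell)$ with potential
$f(i,\ell)=q^\ell\binom{n-i+1-\lceil n/2\rceil}{\ell}/\binom{n-i+1}{\ell}$.
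The inequality $q\,f(i+1,\ell-1)\ge f(i+1,\ell)$ is what makes a lower bound on $r_i$ sufficient, and Pascal's rule closes the inductive step. This gives $\Pr[\cE^*]\ge q^k\binom{\lfloor n/2\rfloor}{k}/\binom{n}{k}\ge (q/4)^k$ for $n\ge 3k$.
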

\begin{proof}
We assume w.l.o.g. that $n \geq 3k$ as otherwise a random permutation suffices\footnote{The probability that all elements of $S^*$ come first in the permutation is $\frac{1}{\binom{n}{k}}$, which is $2^{-O(k)}$ for $n = O(k)$.}.

\textbf{Privacy:} The privacy argument
mirrors that of \cite{GLMRT10} for $d = 2$.

Consider two neighboring hyperedge sets $E, E'$ differing by one hyperedge $e$ of size at most $d$. Let $(x_1, \dots, x_n)$ be any possible output permutation. Let $i^* \in [n]$ be the first step where a vertex $x_{i^*} \in e$ is selected. For $i < i^*$, the hyperedge sets $E_i$ and $E'_i$ differ exactly by $e$, and the degrees of vertices outside $e$ are identical. The ratio of probabilities of selecting $x \notin e$ is:
\begin{align*}
\frac{\Pr[c^*_i = x_i \mid E_i]}{\Pr[c^*_i = x_i \mid E'_i]} &= \frac{\deg_{E_i}(x) + w_i}{\deg_{E'_i}(x) + w_i} \cdot \frac{\sum_{u \in V_i} \deg_{E'_i}(u) + (n-i+1)w_i}{\sum_{u \in V_i} \deg_{E_i}(u) + (n-i+1)w_i} \\ &\le 1 \cdot \left(1 + \frac{d}{\sum_{u \in V_i} \deg_{E_i}(u) + (n-i+1)w_i}\right) \le \exp\left( \frac{d}{(n-i+1)w_i} \right).
\end{align*}
where, in the first inequality, we use the fact that $e$ has size at most $d$, meaning that the total degree on the two hypergraphs differ by at most $d$.

At step $i^*$, the ratio for selecting $x_{i^*} \in e$ is bounded by:
\begin{align*}
\frac{\Pr[c^*_{i^*} = x_{i^*} \mid E_{i^*}]}{\Pr[c^*_{i^*} = x_{i^*} \mid E'_{i^*}]} &=
\frac{\deg_{E_{i^*}}(x) + w_{i^*}}{\deg_{E'_{i^*}}(x) + w_{i^*}} \cdot  \frac{\sum_{u \in V_{i^*}} \deg_{E'_{i^*}}(u) + (n-{i^*}+1)w_{i^*}}{\sum_{u \in V_{i^*}} \deg_{E_{i^*}}(u) + (n-{i^*}+1)w_{i^*}} \\ &\le \left(1 + \frac{1}{w_{i^*}}\right) \exp\left( \frac{d}{(n-i^*+1)w_{i^*}} \right) \\ &\le \exp\left( \frac{1}{w_{i^*}} + \frac{d}{(n-i^*+1)w_{i^*}} \right).
\end{align*}
For all $i > i^*$, the remaining hypergraphs are identical ($E_i = E'_i$), so the ratio is exactly $1$. 

Thus, we have
\begin{align*}
\frac{\Pr[\pi = (x_1, \dots, x_{n}) \mid E]}{\Pr[\pi = (x_1, \dots, x_{n}) \mid E']} &\leq \exp\left(\frac{1}{w_{i^*}} + \sum_{i=1}^{i^*} \frac{d}{(n - i + 1)w_i}\right) \\
&\leq \exp\Paren{\frac{\eps_0}{4d} + \frac{\eps_0}{4} \cdot \sum_{i \in [n]} \sqrt{\frac{1}{n(n - i + 1)}}} \leq \exp(\eps_0).
\end{align*}
Thus, the permutation $\pi$ is $\eps_0$-DP.

\textbf{Utility:}
Let $S^*$ be a fixed optimal hitting set of size $k$. At any step $i \in [n]$, 
we partition $V_i$ into three sets:
\begin{itemize}[nosep]
    \item \emph{Good} vertices $S^*_i = S^* \cap V_i$, of size $k_i = |S^*_i| \le k$.
    \item \emph{Bad} vertices $F_i = \{u \in V_i \setminus S^*_i \mid \deg_{E_i}(u) > 0\}$, of size $f_i = |F_i|$.
    \item \emph{Irrelevant} vertices $I_i = \{u \in V_i \setminus S^*_i \mid \deg_{E_i}(u) = 0\}$, of size $n_i = |I_i|$.
\end{itemize} 
Observe that any irrelevant vertices 
will never be chosen by the decoder $\cD(G, \pi)$. 
Therefore, the insertion of irrelevant vertices into $\pi$ has no impact on the size of the decoded hitting set $S_\pi(E)$. Thus, to guarantee that $S_\pi(E) = S^*$, it suffices to establish the following success event $\mathcal{E}^*$: \emph{all $k$ vertices of $S^*$ are selected before any bad vertex is selected.}
We prove a lower bound on this event in the claim below.

\begin{claim} \label{claim:good-event}
Let $q := \frac{1}{d(1 + 6d/\eps_0)}$. Then, $
\Pr[\cE^*] \geq q^k \cdot \frac{\binom{\lfloor n/2 \rfloor}{k}}{\binom{n}{k}}$.
\end{claim}

\begin{proof}[Proof of \Cref{claim:good-event}]
Let $M := \lfloor n/2 \rfloor$ and $N_0 := n - M = \lceil n/2 \rceil$. For any $i \in [M+1], \ell \in \{0, \dots, k\}$, we say that a history $(c^*_1, \dots, c^*_{i-1})$ is \emph{$(i,\ell)$-successful} if no bad vertex has been selected so far and $\ell$ good vertices remained. 
Furthermore, we let $\cE^*_{\leq M}$ denote the event where all good vertices are selected by step $M$ before any bad vertex is selected.

We will prove an even stronger statement:
Conditioned on any $(i,\ell)$-successful history $\cH_{i-1}$, the probability of $\cE^*_{\leq M}$ satisfies\footnote{Here we use the convention $\binom{0}{\ell} = \ind{\ell = 0}$.}
\begin{equation} \label{eq:single-stage-ind}
    \Pr\left[\cE^*_{\leq M} \mid \cH_{i - 1} \right] \ge f(i, \ell) := q^\ell \cdot \frac{\binom{n - i + 1 - N_0}{\ell}}{\binom{n - i + 1}{\ell}}.
\end{equation} 
Note that the bound on $\Pr[\cE^*] \geq \Pr[\cE^*_M]$ simply follows by plugging in $i = 1, 
\ell = k$. 

We prove \eqref{eq:single-stage-ind} using a backward induction on the step $i = M+1, \dots, 1$.

\paragraph{Base Case:} For $i = M+1$, we simply have $f(i, \ell) = \ind{\ell = 0}$ and \eqref{eq:single-stage-ind} clearly holds.

\paragraph{Inductive Step:} Next, suppose \eqref{eq:single-stage-ind} holds for some $i + 1 \in \{2, \dots, M + 1\}$. We will show that this holds for $i$. First, note that, if $\ell = 0$ (i.e. all good vertices have already been selected), then $\Pr\left[\cE^*_{\leq M} \mid \cH_{i - 1} \right] = 1$ and the bound obviously holds. We henceforth consider the case $\ell \geq 1$. Consider any $(i,\ell)$-successful history $\cH_{i-1}$ and consider the decision at step $i$. 

At step $i$, let $r_i := \Pr[c_i^* \in S_i^* \cup F_i \mid \mathcal{H}_{i-1}]$ be the probability that step $i$ selects a relevant (good or bad) vertex, and let $g_i := \Pr[c_i^* \in S_i^* \mid \mathcal{H}_{i-1}, \, c_i^* \in S_i^* \cup F_i]$ be the conditional probability that a relevant selection is good.
At step $i$:
\begin{itemize}[nosep]
    \item With probability $r_i g_i$, a good vertex ($c_i^* \in S_i^*$) is selected, leaving $\ell - 1$ good vertices.
    \item With probability $1 - r_i$, an irrelevant vertex ($c_i^* \in I_i$) is selected, leaving $\ell$ good vertices.
    \item With probability $r_i(1 - g_i)$, a bad vertex ($c_i^* \in F_i$) is selected, failing the event $\cE^*_{\leq M}$.
\end{itemize}
By the inductive hypothesis, we have
\begin{equation} \label{eq:ind-step}
    \Pr\left[\cE^*_{\leq M} \mid \cH_{i - 1} \right] \ge r_i g_i \cdot f(i+1, \ell - 1) + (1 - r_i) \cdot f(i+1, \ell).
\end{equation}
Since $i \leq M$, we have $w_i \leq 6d/\eps_0$.
Moreover, since $S_i^*$ is a hitting set of $E_i$, we have $\sum_{v \in S_i^*} \deg_{E_i}(v) \ge |E_i|$. Meanwhile, since each edge in $E_i$ has size at most $d$, we have\footnote{Note that since $S^*$ is an optimal hitting set and no bad vertex has been selected, every remaining good vertex has degree at least one in $E_i$.} $\ell + f_i \le \sum_{u \in S_i^* \cup F_i} \deg_{E_i}(u) \le d|E_i|$. Thus,
\[
g_i = \frac{\sum_{v \in S_i^*} (\deg_{E_i}(v) + w_i)}{\sum_{u \in S_i^* \cup F_i} (\deg_{E_i}(u) + w_i)} \ge \frac{|E_i|}{d|E_i|(1 + w_i)} \ge \frac{1}{d(1 + 6d/\varepsilon_0)} = q.
\]
Substituting $g_i \ge q$ into \eqref{eq:ind-step} gives
\[
\Pr\left[\cE^*_{\leq M} \mid \cH_{i - 1} \right] \ge f(i + 1, \ell) + r_i \Big(q \cdot f(i + 1, \ell - 1) - f(i + 1, \ell)\Big).
\]
By definition of $f(\cdot, \cdot)$, we have $q \cdot f(i + 1, \ell - 1) - f(i + 1, \ell) \geq 0$.
Moreover, we have
\[
r_i = \frac{\sum_{u \in S_i^* \cup F_i} (\deg_{E_i}(u) + w_i)}{\sum_{u \in V_i} (\deg_{E_i}(u) + w_i)} = \frac{\sum_{e \in E_i} |e| + (\ell + f_i) w_i}{\sum_{e \in E_i} |e| + (n-i+1) w_i} \ge \frac{\ell w_i}{(n-i+1) w_i} = \frac{\ell}{n-i+1}.
\]
Substituting these into the above, we obtain
\begin{align*}
    \Pr\left[\cE^*_{\leq M} \mid \cH_{i - 1} \right] 
    &\ge f(i + 1, \ell) + \frac{\ell}{n-i+1} \Big( q \cdot f(i + 1, \ell - 1) - f(i + 1, \ell) \Big) \\
    &= \frac{\ell}{n-i+1} \cdot q \cdot f(i + 1, \ell - 1) + \Paren{1 - \frac{\ell}{n-i+1}} \cdot f(i + 1, \ell) \\
    &= q^\ell \left( \frac{\ell}{n-i+1} \cdot \frac{\binom{n-i - N_0}{\ell - 1}}{\binom{n-i}{\ell - 1}} + \frac{n-i+1 - \ell}{n-i+1} \cdot \frac{\binom{n-i - N_0}{\ell}}{\binom{n-i}{\ell}} \right) \\
    &= q^\ell \left( \frac{\binom{n-i - N_0}{\ell - 1} + \binom{n-i - N_0}{\ell}}{\binom{n-i+1}{\ell}} \right) \\ 
    &= q^\ell \cdot \frac{\binom{n-i+1- N_0}{\ell}}{\binom{n-i+1}{\ell}} = f(i, \ell),
\end{align*}
which completes the induction.
\end{proof}

From the above claim, we get $$\Pr[\cE^*] \geq q^k \cdot \frac{\binom{\lfloor n/2 \rfloor}{k}}{\binom{n}{k}} q^k \prod_{j=0}^{k-1} \frac{\lfloor n/2 \rfloor - j}{n - j} \ge \left(\frac{q}{4}\right)^k = (d + 1/\eps_0)^{-O(k)},$$ where we use the assumption $n \ge 3k$ above to deduce $\frac{\lfloor n/2 \rfloor - j}{n - j} \geq \frac{1}{4}$.
\end{proof}

We remark that, the intuition behind the induction proof is relatively simple. For $\cE^*$ to occur, it suffices for the following two events to happen: (i) all $k$ relevant decisions are made before time $M$, and (ii) each of the $k$ relevant decisions turns out to be good. The probability that (i) occurs is no worse than the probability that a random permutation on $n$ items contains $k$ marked items among the first half, which is $2^{-O(k)}$. Meanwhile, as argued through $g_i \geq q$ above, the probability that (ii) occurs for each decision is at least $q$. Combining these two yields the desired bound on $\Pr[\cE^*]$. Note that we chose to use induction above to avoid complicated coupling arguments.

\subsubsection{High Probability via Private Selection}

Next, we show that we can use private selection from private candidates \cite{LT19,Papernot022,GhaziKKKMZ25} to boost the probability. Specifically, we wish to run the algorithm to produce $\pi$, compute the resulting hitting set size $|S_{\pi}(E)|$, and only output if this is below $k$. However, doing this directly would not be DP. To achieve DP, we follow the framework of \cite{GhaziGKKKM26}, where each run is also randomly dropped. The full description is presented in \Cref{alg:dp-fpt-dhs-enc}.

\begin{algorithm}[H]
\caption{Encoder for $d$-Hitting Set}
\label{alg:dp-fpt-dhs-enc}
\begin{algorithmic}[1]
\REQUIRE Hypergraph $G = (V, E)$ where each edge has size at most $d$, privacy parameter $\eps$, target size $k$.
\STATE Let $\eps' \coloneqq \frac{\eps}{3}$.
\STATE Set the score threshold: $\tau \coloneqq -k$.
\STATE Sample $\gamma \sim \mathrm{Geom}(e^{-\eps'}).$
\STATE Let $n' \gets n(1 + \eps)$ and $T \gets \lceil 2n'\ln(n') \cdot (d + 1/\eps)^{C \cdot k} \rceil$ \hfill \COMMENT{$C$ is a sufficiently large constant}
\FOR{$i = 1, \dots, T$}
\STATE Sample $y_i \sim \mathrm{Ber}(e^{-\eps' \cdot \gamma}).$
\IF{$y_i = 1$}
\STATE $\pi_i \gets$ \textsc{DP-FPT-dHS-Gen}$(G; \eps'/2)$.
\STATE Sample $\eta_i \sim \mathrm{Lap}(2/\eps')$.
\IF{$-|S_{\pi_i}(E)| + \eta_i \ge \tau$}
\RETURN $\pi_i$.
\ENDIF
\ENDIF
\ENDFOR
\RETURN an arbitrary default permutation $\pi_{\mathrm{def}}$.
\end{algorithmic}
\end{algorithm}

We can now prove our main theorem of this subsection.

\begin{proof}[Proof of \Cref{thm:vc-main}]
We use \Cref{alg:dp-fpt-dhs-enc} as the encoder. It is clear that the algorithm runs in $(d + 1/\eps)^{O(k)} \cdot n^{O(1)}$ time. We next analyze its privacy and utility.

\textbf{Privacy:}
For a fixed $i \in [T]$, consider the mechanism $\cM_i$ that computes $(\pi_i, -|S_{\pi_i}(E)| + \eta_i)$. For a fixed $\pi_i$, $|S_{\pi_i}(E)|$ has sensitivity one. Thus, $\cM_i$ can be viewed as a composition of the $(\eps'/2)$-DP \textsc{DP-FPT-dHS-Gen} and the $(\eps'/2)$-DP Laplace mechanism, meaning that $\cM_i$ is $\eps'$-DP. Since \Cref{alg:dp-fpt-dhs-enc} is simply an instantiation of \cite[Algorithm 4]{GhaziKKKMZ25} with these mechanisms $\cM_i$, \cite[Theorem 14]{GhaziKKKMZ25} implies that \Cref{alg:dp-fpt-dhs-enc} is $(3\eps')$-DP.

\textbf{Utility:}
Let $\Edef$ denote the event that the algorithm reaches the last line and outputs $\pi_{\mathrm{def}}$, and let $X \coloneqq \max(0, |S_{\pi}(E)| - k)$, where $\pi$ is the output.

Let $S^*$ be an optimal hitting set of size (at most) $k$. By Lemma~\ref{lem:dhs-gen}, each candidate run generates a permutation $\pi_i$ satisfying $S_{\pi_i}(E) = S^*$ and $|S_{\pi_i}(E)| \le k$ with probability at least $p = (d + 1/\eps)^{-C \cdot k}$ for some constant $C > 0$. (We use this same constant when setting the value of $T$.)

If $y_i = 1$ and this event occurs, $\pi_i$ is returned with probability at least 1/2 (because $\Pr[\eta_i \geq 0] = 1/2$). 
Thus, conditioned on $y_i = 1$, the probability of generating and returning $\pi_i$ with $|S_{\pi_i}(E)| \le k$ is at least $p_0 \coloneqq p/2$.

For any $\Delta > 0$, a candidate permutation $\pi_i$ is $\Delta$-suboptimal if $|S_{\pi_i}(E)| > k + \Delta$. Conditioned on $y_i = 1$, the probability of generating and accepting such a candidate is at most:
\[
p_{\text{bad}}(\Delta) \le \Pr\left[ \eta_i \ge \Delta\right] = \frac{1}{2} e^{-\Delta \eps / 6}.
\] 
The algorithm accepts the first candidate whose noisy score exceeds the threshold. Conditioned on accepting some candidate (i.e., $\neg \Edef$), the probability that the accepted candidate is $\Delta$-suboptimal is bounded by the ratio of accepting a bad candidate to accepting a good candidate in any single round:
\begin{align*} 
\Pr[X \geq \Delta \mid \neg \Edef] \le \frac{p_{\text{bad}}(\Delta)}{p_0} \le (d + 1/\eps)^{C \cdot k} e^{-\Delta \eps / 6}.
\end{align*}
Let $\Delta_0 := \Theta\left(\frac{k \log(d + 1/\eps)}{\eps}\right)$ be the value of $\Delta$ such that the RHS above is equal to one.

Finally, we bound $\Pr[\Edef]$ as follows. First, let $\gamma^* = \left\lfloor \frac{\ln(n')}{\eps'} \right\rfloor$.
\begin{align*}
\Pr[\Edef] &\leq \Pr[\Edef \mid \gamma \leq \gamma^*] + \Pr[\gamma > \gamma^*] \\
&\leq \prod_{i=1}^T \Paren{1 - \Pr[y_i = 1 \mbox{ and } |S_{\pi_i}(E)| \leq k \mbox{ and } \eta_i \geq 0 \mid \gamma \leq \gamma^*]} + \frac{1}{n'} \\
&\leq \Paren{1 - \frac{1}{n'} \cdot p \cdot \frac{1}{2}}^T + \frac{1}{n'} 
\leq \exp\Paren{-T \cdot \frac{p}{2n'}} + \frac{1}{n'} \leq \frac{2}{n'}.
\end{align*}

Combining the above two inequalities, we get
\begin{align*}
\E[X] &\leq \E[X \mid \neg \Edef] + n \cdot \Pr[\Edef] \\
&\leq  \Paren{\Delta_0 + \int_{\Delta_0}^{n} e^{-(\Delta - \Delta_0) \eps / 6} d\Delta} +  n \cdot \frac{2}{n'} \leq \Delta_0 + \frac{6}{\eps} + \frac{2}{1 + \eps} \leq O\Paren{\frac{k \log(d + 1/\eps)}{\eps}}.
\end{align*}
Thus, the expected size of the final hitting set returned by the decoder is $\E[|S_{\pi}(E)|] \le k + O(\frac{k \log(d+1/\eps)}{\eps}) = k(1 + O(\frac{\log(d+1/\eps)}{\eps}))$. This yields the claimed $(1 + O(\frac{\log(d+1/\eps)}{\eps}))$-approximation.
\end{proof}
\subsection{Subgraph Packing via Color Coding}
\label{sec:color-coding}

In this section, we present the Color Coding technique for $H$-packing. The final guarantee of our algorithm is as follows.

\begin{theorem} \label{thm:color-coding}
Let $H$ be any fixed hypergraph. For $\eps \geq 1$, there is an $\eps$-DP $\Paren{1 - O\Paren{\frac{\log k + \frac{\log \log (1+\eps)}{k}}{\eps}}}$-FPT-approximation implicit representation scheme for $H$-Packing with running time $k^{O(k)} \cdot n^{O(1)} \cdot \ln(1 + \eps)$.
\end{theorem}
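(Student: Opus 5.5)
The plan is to use a \emph{coloring representation}, combined with the exponential mechanism (\Cref{thm:em}) over a pool of random colorings. Let $h := |V(H)|$ be a constant. The representation is a coloring $\chi : V \to [k]$. The decoder $\cD(G, \chi)$ does the following for each color class $c \in [k]$: it searches $G$ for a copy of $H$ whose vertices are all colored $c$, by brute force over $h$-subsets of that class, which takes $n^{O(h)} = n^{O(1)}$ time. It then outputs the found copies. Copies lying in different color classes are automatically vertex-disjoint, so this is a valid $H$-packing, and it runs in polynomial time. Define the score $\scr(G, \chi) := |\cD(G,\chi)|$, the number of color classes that contain a copy of $H$. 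Adding or removing one hyperedge changes the copies available in at most one color class (the class containing all of that edge's vertices, if any), so $\Delta(\scr(\cdot,\chi)) \le 1$.

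The encoder works as follows. First it samples, independently of the data, $L$ uniformly random colorings $\chi_1,\dots,\chi_L : V \to [k]$. It then runs the exponential mechanism with score $\scr$ over $\{\chi_1,\dots,\chi_L\}$. Since the pool is data-independent, the output is $\eps$-DP by \Cref{thm:em}. Each score is computed in $n^{O(1)}$ time, so the total running time is $L \cdot n^{O(1)}$.

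For utility, fix $k \le \opt(G)$ and $k$ disjoint copies $H_1,\dots,H_k$. A random coloring makes each $H_j$ monochromatic, with distinct colors across the $j$'s, with probability at least $k!\cdot k^{-hk} = k^{-O(k)}$. In that case $\scr = k$. Taking $L = k^{O(k)} \cdot \ln(1+\eps)$, hmm, more precisely $L = \lceil k^{Ck}\ln(2+\eps)\rceil$, the pool contains a perfect coloring except with probability $\exp(-L k^{-O(k)}) \le 1/(1+\eps)^{\Omega(1)}$. Then \Cref{thm:em} gives
\[\E[\scr] \ge k\Pr[\text{good}] - O\Paren{\tfrac{\log L}{\eps}} \ge k - \tfrac{k}{\mathrm{poly}(1+\eps)} - O\Paren{\tfrac{k\log k + \log\log(1+\eps)}{\eps}}.\]
Dividing by $k$ gives the target ratio $1 - O\Paren{\frac{\log k + \log\log(1+\eps)/k}{\eps}}$.

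The step I expect to be delicate is controlling the failure term $k\Pr[\text{no good coloring}]$. It must be at most $O(k\log k/\eps)$, i.e.\ the failure probability should be $O(1/\eps)$. I would therefore choose $L$ so that $\exp(-L/k^{Ck}) \le 1/\eps$, i.e.\ $L = k^{Ck}\ln(1+\eps)$ up to constants, using $\eps\ge1$. Then $\log L = O(k\log k) + \log\log(1+\eps)$, which produces exactly the $\frac{\log\log(1+\eps)}{k\eps}$ term in the statement. The other points needing care are checking the sensitivity bound for hyperedges (when a copy disappears, only one class's count drops by at most one) and $k$ versus $\opt$ conventions, but these are routine.
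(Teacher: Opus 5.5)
Your proposal is correct and follows essentially the same route as the paper. The paper likewise runs the exponential mechanism, with the same ``number of colors with a monochromatic copy'' score of sensitivity one, over a data-independent pool of random colorings. The only difference is the pool size: the paper takes $\lceil k^{qk}\cdot 2\ln(k+\eps)\rceil$ colorings, so the failure probability is at most $(k+\eps)^{-2}$. Your $\lceil k^{Ck}\ln(2+\eps)\rceil$, giving failure probability at most $1/(2+\eps)$, also suffices. For $k \ge 2$ the resulting loss $k/\eps$ is absorbed into $O(k\log k/\eps)$, and for $k=1$ every coloring is good, so there is no failure term.
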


We note that our approximation guarantee is only non-trivial when $\eps \geq \Omega(\log k)$. While this might seem restrictive, we show below (\Cref{subsec:color_coding_lb}) that this dependency is necessary if we work in the current implicit representation.

Recall that a hyperedge $e \in E$ is called \emph{monochromatic} under a vertex coloring $c: V \to [k]$ if all vertices in $e$ are assigned the same color. Our coloring representation can be formalized as follows.

\begin{definition}[Coloring Representation] \label{def:coloring-rep}
In the \emph{coloring representation}, the encoder outputs a vertex coloring $c: V \to [k]$ as the implicit representation. 

The decoder $\cD(G, c)$ operates by iterating through each color $i \in [k]$ and selecting a monochromatic copy of $H$ of color $i$ if one exists.
\end{definition}

The algorithm is very simple: The encoder simply use the exponential mechanism (\Cref{def:em}) to pick the best coloring (from random colorings), as formalized below.

\begin{algorithm}[H]
\caption{Color Coding Encoder for $H$-Packing}
\label{alg:color-coding}
\begin{algorithmic}[1]
\REQUIRE Hypergraph $G = (V, E)$, target size $k$, privacy budget $\eps$
\STATE $t \gets \lceil k^{q k} \cdot 2 \ln\Paren{k + \eps} \rceil$, where $q$ is the number of vertices in $H$
\STATE Candidate space $\mathcal{C}$ consists of $t$ uniformly random colorings $c: V \to [k]$
\STATE Score function $\scr(G, c) :=$ number of colors $i \in [k]$ with a monochromatic copy of $H$ w.r.t. $c$
\RETURN $\EM_{\scr}(G)$
\end{algorithmic}
\end{algorithm}

We can now easily prove our main theorem.

\begin{proof}[Proof of \Cref{thm:color-coding}]
%
We use \Cref{alg:color-coding}; 
let $\tilde{c}$ be the
output.  Since the score of each coloring can be computed in polynomial time, the total running time is $t \cdot N^{O(1)} = k^{O(k)} \cdot n^{O(1)}$.  

\textbf{Privacy:} Since $\scr(G, \cdot)$ has sensitivity one, \Cref{thm:em} implies that \Cref{alg:color-coding} is $\eps$-DP.  

\textbf{Utility:} Assume $G$ contains a packing $\mathcal{P}^*$ of $k$ disjoint subsets $S_1, \dots, S_k$ where each $G[S_i]$ is isomorphic to $H$. We say that a coloring $c: V \to [k]$ is \emph{good} if each of $S_1, \dots, S_k$ is monochromatic and they all get assigned different colors. A random coloring is good with probability at least $k^{-qk}$. Let $\Egood$ denote the event that at least one coloring in $\cC$ is good. When $\Egood$ holds, we have $\max_{c \in \cC} \scr(G, c) = k$. Thus, from the previous argument and \Cref{thm:em}, we have
\begin{align*}
\E[\scr(G, \tc)] &\geq \E[\scr(G, \tc) \mid \Egood] \cdot \Pr[\Egood] \\
&\geq \Paren{k - O\Paren{\frac{\log t}{\eps}}} \cdot \Paren{1 - (1 - k^{-qk})^t} \\
&\geq \Paren{k - O\Paren{\frac{k \log k + \log \log(1 + \eps)}{\eps}}} \cdot \Paren{1 - \frac{1}{(k + \eps)^2}} \\
&= k -  O\Paren{\frac{k \log k +  \log \log (1+\eps)}{\eps}}.
\end{align*}
Since the decoder finds exactly one copy of $H$ for each satisfied color $i \in [k]$, the number of returned subgraphs matches the score exactly. Thus, this yields an $\Paren{1 - O\Paren{\frac{\log k + \frac{\log \log (1+\eps)}{k}}{\eps}}}$-approximation.
\end{proof}

\subsection{Maximum Induced Subgraph with Hereditary Property}
\label{sec:spatial-restriction}

In this section, we present the subset representation, tailored for finding maximum induced subgraphs satisfying a hereditary property $\Gamma$ on graph classes satisfying two mild properties (\Cref{assum:large-ind} and \Cref{assum:constant-approx-deletion}), which will be discussed in more detail below. 
Under these assumptions, we provide $(1 - 1/\eps)$-approximation under our implicit representation framework.

\begin{theorem} \label{thm:spatial-kernel}
Let $\Gamma, \cG$ be any hereditary property and any hereditary graph class satisfying \Cref{assum:large-ind} and \Cref{assum:constant-approx-deletion}. For $\eps \geq 1$, there is an $\eps$-DP $\Paren{1 - O\Paren{1/\eps}}$-FPT-approximation implicit representation scheme for Maximum Induced $\Gamma$-subgraph with running time $2^{O(k)} \cdot n^{O(1)}$.
\end{theorem}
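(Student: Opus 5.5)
The plan is to use the subset representation with the exponential mechanism (\Cref{thm:em}) run over \emph{all} subsets of $V$, and to let the two assumptions do the remaining work. I am reading \Cref{assum:large-ind} as saying that every $G \in \cG$ on $n$ vertices satisfies $\opt(G) \geq n/c_1$ for a constant $c_1$. I am reading \Cref{assum:constant-approx-deletion} as giving a polynomial-time algorithm $\cA$ with the following property: given $G \in \cG$ and $S \subseteq V$, it returns $T \subseteq S$ such that $G[T]$ satisfies $\Gamma$ and $|S \setminus T| \leq c_2 \cdot \dist(S)$, where $\dist(S)$ is the minimum number of vertices to delete from $G[S]$ so that $\Gamma$ holds. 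If the actual assumptions are phrased differently, the argument should adapt with only cosmetic changes.

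The encoder outputs a set $\tilde S \subseteq V$, and the decoder returns $\cA(G, \tilde S)$. The candidate set is $\cC = 2^V$, with score
\[
\scr(G, S) := \frac{|S| - c_2 \cdot \dist(S)}{c_2}.
\]

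\textbf{Step 1 (sensitivity).} The vertex set is shared by neighbors, so $|S|$ is data-independent. The key step is to show that $\dist(S)$ has sensitivity at most $1$.

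Suppose $G \sim G'$ differ in the edge $\{u, v\}$, and let $R$ be an optimal deletion set for $S$ in $G$. Then $R \cup \{u\}$ is a valid deletion set in $G'$. Indeed, $G[S \setminus (R \cup \{u\})] = G'[S \setminus (R \cup \{u\})]$ because that vertex set avoids $u$. This induced subgraph satisfies $\Gamma$ by heredity, since it is an induced subgraph of $G[S \setminus R]$.

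Hence $\Delta(\scr(\cdot, S)) \leq 1$ after dividing by $c_2$, and \Cref{thm:em} gives $\eps$-DP. This is where hereditarity is essential, and it is the step I expect to need the most care. In particular, I need to check that membership in $\cG$ is not disturbed by the neighboring relation. If necessary I would restrict attention to, or define the score only for, inputs in $\cG$, as the problem statement implicitly does.

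\textbf{Step 2 (running time).} Both $|\cC| = 2^n$ and each $\dist(S)$ can be handled by brute force in $2^{O(n)}$ time. By \Cref{assum:large-ind}, $n \leq c_1 \opt$.

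There is one subtlety: the given parameter $k$ may be smaller than $\opt$. I would handle this by replacing $k$ with $k' := \max(k, \lceil n/c_1 \rceil)$. This is computable from public information and still satisfies $k' \leq \opt$. After the replacement $n = O(k')$, and the encoder runs in $2^{O(k)} \cdot n^{O(1)}$ time. If a time bound in terms of the original $k$ is required, I would argue that $k \geq n/c_1$ may be assumed without loss of generality, since otherwise the target $\alpha k$ only gets easier.

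\textbf{Step 3 (utility).} Let $S^*$ be an optimal solution, so $\dist(S^*) = 0$ and $\scr(G, S^*) \geq k'/c_2$. By \Cref{thm:em},
\[
\E\bigl[\,|\tilde S| - c_2 \dist(\tilde S)\,\bigr] \geq k' - O\!\left(\frac{c_2 \log 2^n}{\eps}\right) = k' - O\!\left(\frac{n}{\eps}\right) = k'\bigl(1 - O(1/\eps)\bigr).
\]
The decoder output $T$ satisfies $|T| \geq |\tilde S| - c_2 \dist(\tilde S)$, so $\E|T| \geq (1 - O(1/\eps))\, k' \geq (1 - O(1/\eps))\, k$. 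The constants $c_1$ and $c_2$ are absorbed into the $O(\cdot)$, and $\eps \geq 1$ keeps the bound meaningful. This gives the claimed $(1 - O(1/\eps))$-FPT-approximation.
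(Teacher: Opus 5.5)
Your Steps 1 and 3 match the paper's proof. The paper uses the same score (up to scaling), $|T|/\rho - \dist_\Gamma(G[T])$, and the same heredity-based sensitivity argument (add $u$ to the deletion set). Its exponential-mechanism calculation is also the same. It ranges over $\binom{V}{k}$ rather than $2^V$, which changes $\log|\cC|$ only by a constant factor; your version even gives the bound in terms of $\opt$.

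The gap is in Step 2. Your encoder never uses $k$: on every input it enumerates all $2^n$ subsets and brute-forces $\dist$, so it takes $2^{\Theta(n)}$ time. The definition of an FPT-approximation demands time $f(k)\cdot N^{O(1)}$ for every admissible $k \le \opt(G)$. For instance, take $k = 1$ with $n$ large; then $\opt = \Omega(n)$ is not bounded by any function of $k$. Substituting $k' = \max(k, \lceil n/c_1 \rceil)$ changes only the analysis, not the algorithm, and $2^{O(k')} = 2^{O(n)}$ whenever $k < n/c_1$. So your sentence ``the encoder runs in $2^{O(k)}\cdot n^{O(1)}$ time'' silently replaces $k'$ by $k$. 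Your fallback (``the target only gets easier'') is the right instinct, but an easier target does not by itself give a scheme that meets it in FPT time. You must say what the encoder and decoder actually do in that regime, and your decoder $\mathcal{A}(G,\tilde S)$ has no branch for it.

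The paper handles this with an explicit case split on the public comparison of $|V|$ with $k\cdot C_{\cG}$:
\begin{itemize}
\item If $|V| \ge k \cdot C_{\cG}$, the encoder immediately returns an arbitrary size-$k$ set. The decoder reads off $k = |T|$ (representations always have size exactly $k$) and outputs $\algbaseline(G)$. This set has at least $|V|/C_{\cG} \ge k$ vertices and is found in polynomial time.
\item Otherwise $n < k\cdot C_{\cG}$, and brute force costs $2^{O(k)}$.
\end{itemize}
This uses the \emph{algorithmic} content of \Cref{assum:large-ind}, namely a polynomial-time baseline, which your existential reading $\opt(G) \ge n/c_1$ discards. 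With variable-size representations, your decoder would also need to learn which case applies. One way is to always compute $\algbaseline(G)$ as well and return the larger of the two solutions.

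Alternatively, even under your weaker reading, restrict the mechanism to subsets of a fixed, data-independent $U \subseteq V$ with $|U| = \min(n, \lceil c_1 k \rceil)$. Since $\cG$ is hereditary, $G[U] \in \cG$ and hence $\opt(G[U]) \ge k$. Sensitivity is unaffected, and the mechanism gives $\E[|\tilde S| - c_2 \dist(\tilde S)] \ge \opt(G[U]) - O(|U|/\eps) \ge k(1 - O(1/\eps))$. The running time becomes $2^{O(k)} \cdot n^{O(1)}$, and the decoder is unchanged.
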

As mentioned above, we require two natural assumptions that will be satisfied by many sparse graph classes. First, we assume that there always exists a linear-size subgraph satisfying $\Gamma$ and that such a subgraph can be found in polynomial time.
\begin{assumption}[Large Baseline]
\label{assum:large-ind}
There is a polynomial-time algorithm $\algbaseline$ that, given any graph $G = (V, E) \in \cG$, can find $S \subseteq V$ of size at least $|V| / C_{\cG}$ such that $G[S]$ satisfies $\Gamma$, where $C_{\cG} > 0$ is a constant.
\end{assumption}

To state the second assumption, it will be more convenient to first define the ``dual'' optimization problem to the Maximum $\Gamma$-Subgraph:
\begin{definition}[Deletion to $\Gamma$]
In the $\Gamma$-Deletion problem, we are given $G = (V, E)$ and the goal is to output $S$ of a smallest size such that $G[V \setminus S]$ satisfies $\Gamma$; let $\dist_{\Gamma}(G)$ denote the optimum.
\end{definition}
We assume there is a constant approximation for this problem on the  graph classes of interest:
\begin{assumption}[Constant Approximation for Deletion to $\Gamma$] \label{assum:constant-approx-deletion}
For some constant $\rho \geq 1$, there exist a polynomial-time algorithm $\algdeletion$ that, for any given graph $G \in \cG$, outputs $S'$
such that $G[V \setminus S']$ satisfies $\Gamma$ and $|S'| \leq
\rho \cdot \dist_{\Gamma}(G)$.
\end{assumption}

We remark that the algorithms in \Cref{assum:large-ind} and \Cref{assum:constant-approx-deletion} can be non-private.
We can now define our subset representation.

\begin{definition}[Subset Representation] \label{def:subset-rep}
In the \emph{subset representation}, the encoder outputs a subset $T \subseteq V$ of size $k$ as the implicit representation.

The decoder $\cD(G = (V, E), T)$ works as follows:
\begin{itemize}[nosep]
\item If $|V| \geq k \cdot C_{\cG}$, it simply returns the output from $\algbaseline(G)$ from \Cref{assum:large-ind}.
\item Otherwise, it return $T \setminus \algdeletion(G[T])$ where $\algdeletion$ is from \Cref{assum:constant-approx-deletion}.
\end{itemize}
\end{definition}

Note that the two assumptions imply that the decoder runs in polynomial time. 
We call this the ``subset'' representation because, in the non-trivial case $|V| < k \cdot C_\cG$, the output of $T \setminus \algdeletion(G[T])$ is always a subset of the implicit representation $T$.

\subsubsection{Algorithm}

We use the exponential mechanism to select the set, where the score of each subset $T$ is (a scaled version of) the output size guarantee of $\algdeletion(G[T])$, using \Cref{assum:constant-approx-deletion}.

\begin{algorithm}[H]
\caption{Algorithm for Property $\Gamma$}
\label{alg:spatial-kernel}
\begin{algorithmic}[1]
\REQUIRE Graph $G = (V, E)$, target size $k$, privacy parameter $\eps$, hereditary property $\Gamma$, graph class constant $C_{\cG}$ from \Cref{assum:large-ind}, approximation ratio $\rho$ from \Cref{assum:constant-approx-deletion}.
\ENSURE Implicit representation $T \subseteq V$
\IF{$|V| \geq k \cdot C_{\cG}$}
\RETURN Arbitrary size-$k$ subset of $V$
\ENDIF
\STATE  Candidate space $\cC \gets \binom{V}{k}$ \hfill \COMMENT{All size-$k$ subsets of $V$}
\STATE Score function $\scr(G, T) \gets \frac{|T|}{\rho} - \dist_\Gamma(G[T])$ \hfill \COMMENT{Compute $\dist_\Gamma(G[T])$ using brute-force}
\RETURN $\EM_{\scr}(G)$
\end{algorithmic}
\end{algorithm}

\begin{proof}[Proof of \Cref{thm:spatial-kernel}]
We use \Cref{alg:spatial-kernel}. If $|V| \geq k \cdot C_{\cG}$, then the algorithm immediately returns and thus runs in polynomial time. Otherwise, since the score of each subset can be computed via brute-force in $2^{O(|V|)} = 2^{O(k)}$ time, the total running time is $2^{O(k)} \cdot n^{O(1)}$.

\textbf{Privacy:} 
By \Cref{thm:em}, it suffices to show that, for each $T \subseteq V$,  $\scr(G, T)$ has sensitivity one. Since $|T|/\rho$ is fixed, it is in turn sufficient to show that $\dist_{\Gamma}(G[T])$ has sensitivity one.

To see that this holds, consider neighboring graphs $G = (V, E)$ and $G' = (V, E \cup \{e\})$ where $e = \{u, v\}$. For any deletion set $D \subseteq T$, we always have that $G[T \setminus (D \cup \{u\})] = G'[T \setminus (D \cup \{u\})]$. In other words, if $G[T \setminus D]$ (resp., $G'[T \setminus D]$) satisfies $\Gamma$, then $G'[T \setminus (D \cup \{u\})]$ (resp., $G[T \setminus (D \cup \{u\})]$) satisfies $\Gamma$.  Thus, we have $|\dist_{\Gamma}(G[T]) - \dist_{\Gamma}(G'[T])| \leq 1$, yielding the desired sensitivity guarantee.

\textbf{Utility:} Assumption~\ref{assum:large-ind} immediately handles  the case $|V| \geq k \cdot C_{\cG}$. We henceforth consider the remaining case $|V| < k \cdot C_{\cG}$. Let $\tT$ denote the output of \Cref{alg:spatial-kernel}.
If there exists $S^*$ such that $|S^*| \geq k$ and $G[S^*]$ satisfies $\Gamma$, then by
the hereditary property, there is a
subset $T^* \subseteq S^*$ such that
$|T^*| = k$ in $\cC$ with $\scr(G, T^*) \geq k/\rho$. Thus, by the guarantee of the exponential mechanism (\Cref{thm:em}), we have
$\E\left[\scr(G, \tT)\right] \geq k/\rho - O(n/\eps).$ Rearranging this gives 
\begin{align*}
\E[|\tT| - \rho \cdot \dist_\Gamma(G[\tT])] \geq k - O(\rho \cdot n/\eps) \geq k\Paren{1 - O_{\rho, C_{\cG}}\Paren{\frac{1}{\eps}}}.
\end{align*}
Finally, recall from \Cref{assum:constant-approx-deletion} that the output from $\algdeletion(G[\tT])$ has size at most $\rho \cdot \dist_\Gamma(G[\tT])$. Hence, the expected output size from the decoder is at least $k(1 - O(1/\eps))$.
\end{proof}

\subsubsection{Applications}

We now apply Theorem~\ref{thm:spatial-kernel} to several important combinations of problems and degenerate graphs. To do this, we first make a simple observation that \Cref{assum:large-ind} holds quite generally:
\begin{observation} \label{obs:large-ind}
Let $\Gamma$ be any hereditary property that contain all independent sets (i.e., empty graphs), and $\cG$ be the class of all $\Delta$-degenerate graphs. Then, \Cref{assum:large-ind} holds with $C_{\cG} = \Delta + 1$.
\end{observation}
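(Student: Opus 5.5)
We propose to construct $\algbaseline$ explicitly via the standard greedy coloring of degenerate graphs. Given a $\Delta$-degenerate graph $G = (V, E)$, we will first compute a degeneracy ordering in polynomial time. We repeatedly find a vertex of minimum degree in the current remaining graph, remove it, and append it to a list $v_1, \dots, v_n$. By $\Delta$-degeneracy, applied to the induced subgraph on the vertices still present, each removed vertex has at most $\Delta$ neighbors among the vertices remaining at the time of its removal. Equivalently, in the reversed order $v_n, \dots, v_1$, every vertex has at most $\Delta$ neighbors appearing earlier.

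Next we greedily color the vertices in the reversed order with colors from $[\Delta+1]$. Each vertex has at most $\Delta$ previously colored neighbors, so some color is always free. This yields a proper $(\Delta+1)$-coloring in polynomial time. By pigeonhole, some color class $S$ has $|S| \geq |V|/(\Delta+1)$, and $\algbaseline$ outputs this largest class. Since the coloring is proper, $S$ is an independent set, so $G[S]$ is an empty graph. By hypothesis $\Gamma$ contains all empty graphs, so $G[S]$ satisfies $\Gamma$, and \Cref{assum:large-ind} holds with $C_{\cG} = \Delta + 1$.

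There is no real obstacle here. The only point needing care is justifying that the min-degree peeling works, which follows directly from the definition applied to each remaining vertex set $S$. We should also note that heredity of $\Gamma$ is not even needed; containment of empty graphs suffices. If $\Gamma$ is stated for graphs up to isomorphism, empty graphs on any vertex set are included.
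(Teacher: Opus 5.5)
Your proof is correct, but it takes a slightly different route from the paper. The paper builds the independent set directly in one greedy loop: repeatedly pick a minimum-degree vertex $v$ of the remaining graph, add $v$ to $S$, and delete $v$ together with its remaining neighbors. The remaining graph is always an induced subgraph of $G$, so $\Delta$-degeneracy gives $\deg(v) \le \Delta$ there, and each round deletes at most $\Delta+1$ vertices. Hence there are at least $|V|/(\Delta+1)$ rounds. The chosen vertices are pairwise non-adjacent because every neighbor of a chosen vertex is deleted along with it.

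You instead compute the full degeneracy ordering, greedily color it in reverse with $\Delta+1$ colors, and take the largest color class by pigeonhole. Both arguments rest on the same fact: min-degree peeling leaves each vertex with at most $\Delta$ neighbors among the vertices still present. The paper's version is slightly more economical, since it needs one pass, no coloring, and only a direct counting of deleted vertices per round. Yours proves the stronger statement that $V$ partitions into $\Delta+1$ independent sets, and the bound then follows by averaging.

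Your remark that heredity of $\Gamma$ is not needed is also correct. The paper's argument uses only that $\Gamma$ contains all empty graphs.
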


\begin{proof}
Here the baseline algorithm $\algbaseline$ is simply the greedy algorithm: Pick a vertex with the minimum degree, remove it and all its neighbors, and repeat this process until the graph becomes empty. By $\Delta$-degeneracy, we remove at most $\Delta + 1$ vertices in each iteration. Therefore, we find an independent set of size at least $\frac{|V|}{\Delta + 1}$.
\end{proof}
This immediately yields the following.
\begin{corollary}
Let $\Delta \in \N$ be any constant, and $\Gamma$ be any of the following hereditary properties:
\begin{enumerate}[nosep]
\item Independent Sets (i.e., empty graphs), \label{item:first-forbidden}
\item Cluster Graphs,
\item Claw-Free Graphs,
\item Split Graphs,
\item Cographs,
\item Bounded Degree, \label{item:last-forbidden}
\item Forests (i.e., Cycle-free graphs),
\end{enumerate}
Then, for any $\eps \geq 1$, there is an $\eps$-DP $\Paren{1 - O\Paren{1/\eps}}$-FPT-approximation implicit representation scheme for Maximum Induced $\Gamma$-subgraph with running time $2^{O(k)} \cdot n^{O(1)}$ on $\Delta$-degenerate graphs.
\end{corollary}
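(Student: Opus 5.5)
The corollary follows by applying \Cref{thm:spatial-kernel} once for each listed property $\Gamma$, with $\cG$ the class of $\Delta$-degenerate graphs. For each property we must check three things: that it is hereditary, that \Cref{assum:large-ind} holds, and that \Cref{assum:constant-approx-deletion} holds.

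\textbf{Hereditary and \Cref{assum:large-ind}.} The class $\cG$ is hereditary, since an induced subgraph of a $\Delta$-degenerate graph is again $\Delta$-degenerate. Each listed $\Gamma$ is hereditary. Indeed, items \ref{item:first-forbidden}--\ref{item:last-forbidden} are all characterized by forbidden induced subgraphs:
\begin{itemize}[nosep]
\item independent sets forbid $K_2$;
\item cluster graphs forbid $P_3$;
\item claw-free graphs forbid $K_{1,3}$;
\item split graphs forbid $2K_2$, $C_4$ and $C_5$;
\item cographs forbid $P_4$;
\item degree at most $b$ forbids the stars $K_{1,b+1}$ (together with any supergraph on $b+2$ vertices).
\end{itemize}
Forests are hereditary because deleting vertices cannot create a cycle. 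Every listed property contains all edgeless graphs, so \Cref{obs:large-ind} gives \Cref{assum:large-ind} with $C_{\cG} = \Delta+1$.

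\textbf{\Cref{assum:constant-approx-deletion}, finite forbidden family.} For items \ref{item:first-forbidden}--\ref{item:last-forbidden}, $\Gamma$ is defined by a finite family $\mathcal{F}$ of forbidden induced subgraphs, each on at most $s = O(1)$ vertices. $\Gamma$-Deletion is then an $s$-Hitting Set instance. The universe is $V$, and the hyperedges are the vertex sets of induced copies of members of $\mathcal{F}$. There are at most $n^{s}$ such copies, so they can be enumerated in polynomial time.

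The standard greedy primal--dual algorithm for this instance runs as follows. While some forbidden induced copy remains, delete all of its vertices. The deleted copies are vertex-disjoint, and each of them must be hit by any optimal deletion set. Hence the algorithm is an $s$-approximation, and we may take $\rho = s$. This does not even use degeneracy.

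For bounded degree $b$, the forbidden structures can also be taken to be the stars $K_{1,b+1}$, not necessarily induced. These still have $b+2 = O(1)$ vertices, so the same argument applies.

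\textbf{\Cref{assum:constant-approx-deletion} for forests.} For forests (Feedback Vertex Set), cycles are unbounded, so the hitting-set argument fails. Instead we invoke the known polynomial-time $2$-approximation for Feedback Vertex Set (Bafna--Berman--Fujito / Becker--Geiger), which works on all graphs. This gives $\rho = 2$.

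\textbf{Conclusion.} With all three conditions verified for every listed $\Gamma$, \Cref{thm:spatial-kernel} yields the claimed $\eps$-DP $(1-O(1/\eps))$-FPT-approximation scheme with running time $2^{O(k)} n^{O(1)}$. The constants hidden in $O(\cdot)$ depend only on $\rho$ and $\Delta$.

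I expect the main obstacle to be the forest case, since it is the only one that needs an external approximation result rather than the generic hitting-set reduction. The remaining cases are routine once each property is written as a finite forbidden-induced-subgraph family.
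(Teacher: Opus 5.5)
Your proposal is correct and follows the paper's proof: invoke \Cref{thm:spatial-kernel}, obtain \Cref{assum:large-ind} from \Cref{obs:large-ind}, treat items 1--6 as deletion problems for bounded-size forbidden subgraphs, and handle forests via the $2$-approximation for Feedback Vertex Set. The only difference is that for the finite forbidden families you prove the $O(1)$-approximation yourself (greedy deletion of vertex-disjoint copies, giving an $s$-approximation), where the paper simply cites Lund--Yannakakis.
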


\begin{proof}
From \Cref{thm:spatial-kernel}, it suffices to check that \Cref{assum:large-ind} and \Cref{assum:constant-approx-deletion} hold. The former follows immediately from \Cref{obs:large-ind}. For the latter, the hereditary properties in items \ref{item:first-forbidden}-- \ref{item:last-forbidden} can be formulated as forbidden (induced) subgraphs with bounded sizes. Such problems always admit $O(1)$-approximation algorithm in polynomial time \cite{LundY93}. 

For item 7 (i.e., Forests), the associated deletion problem is the Feedback Vertex Set, which admits 2-approximation algorithm in polynomial time~\cite{BafnaBF99}.
\end{proof}
\section{Representation-Independent (Computational) Lower Bounds}
\label{sec:lower-bounds}

While we have provided algorithms in the previous section, a natural question is whether we can significantly improve the approximation-privacy tradeoff by, e.g., coming up with more innovative representation. In this section, we partially answer this question in the negative: We establish a representation-independent lower bound that applies to any implicit representation whose decoder runs in polynomial (or even subexponential\footnote{We remark that, if we instead allow the decoder exponential time, then it can simply use brute-force to solve the problems without even the need of the encoder at all.}) time. This holds for any sufficiently small (constant) privacy budget and approximation ratio, as stated more formally below.



\begin{theorem}
\label{thm:main-rep-ind-lb}
Assuming Gap-ETH with Advice, there exist constants $\eps > 0$ and $d \in \N$, $\alpha_{\text{VC}} > 1$, and $\alpha_{\text{IS}}, \alpha_{\text{TP}} < 1$ such that no $\eps$-DP implicit representation scheme with a subexponential-time decoder can achieve an $\alpha_{\text{VC}}$-approximation for Vertex Cover, an $\alpha_{\text{IS}}$-approximation for Maximum Independent Set, or an $\alpha_{\text{TP}}$-approximation for Triangle Packing on $d$-bounded-degree graphs.
\end{theorem}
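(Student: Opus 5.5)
We reduce from $\gamma$-Gap-3SAT($D$) and use the privacy of the encoder to replace it by a small advice set of representations. First, by \Cref{obs:exp-to-prob-approx}, an $\alpha$-approximation scheme (with $\alpha$ strictly better than some gap ratio $\rho$) yields an $\eps$-DP scheme $(\cE,\cD')$ with subexponential decoder that solves $\rho$-gap-$\Pi$ with one-sided error: YES instances are accepted with probability at least some constant $\xi>0$, and NO instances are always rejected. Second, we invoke the standard linear-size, gap-preserving reductions from $\gamma$-Gap-3SAT($D$) to Vertex Cover, Independent Set and Triangle Packing on bounded-degree graphs (e.g., the FGLSS / Papadimitriou--Yannakakis style reductions, and Kann / Guruswami--Lee for triangle packing). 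Such a reduction maps a formula $\phi$ on $N$ variables to a graph $G_\phi$ on a fixed vertex set $V_N$ with $|V_N|=O(N)$ and $O(N)$ edges of degree at most $d$, together with a target $k_N$ depending only on $N$ (after padding). Satisfiable formulas go to $\opt=k_N$, and $\gamma$-far formulas go to $\opt$ beyond $\rho k_N$. Constants $\rho$, $d$ and $\alpha$ are fixed by this reduction.

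The key step is the covering argument. Fix $N$ and consider all graphs $G_\phi$ on $V_N$; each has at most $m=O(N)$ edges. Fix a base graph $G_0$ (say, the empty graph on $V_N$). By group privacy, for any $G_\phi$ and any set $\mathcal S$ of representations, $\Pr[\cE(G_0;k_N)\in\mathcal S]\ge e^{-\eps m}\Pr[\cE(G_\phi;k_N)\in\mathcal S]$. For a YES instance $\phi$, let $\mathcal S_\phi$ be the set of representations $C$ on which $\cD'(G_\phi,C)$ accepts. Then $\Pr_{C\sim\cE(G_0)}[C\in\mathcal S_\phi]\ge \xi e^{-\eps m}$. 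Now draw $L$ i.i.d. samples from $\cE(G_0;k_N)$. A single YES formula is missed by all samples with probability at most $(1-\xi e^{-\eps m})^L$. The number of 3SAT($D$) formulas on $N$ variables is $2^{O(N\log N)}$, or $2^{O(N)}$ after fixing a canonical encoding with bounded occurrence. So a union bound with $L=e^{\eps m}\cdot O(N\log N)/\xi$ shows that some fixed list $C_1,\dots,C_L$ hits every YES formula. We take this list as the advice string. The main obstacle is making its size $L\cdot|C|$ at most $2^{c_0N}$. This requires choosing $\eps$ small enough that $\eps m\le (c_0/2) N$, which is exactly why the theorem only holds for small constant $\eps$. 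It also requires bounding the representation length $|C|$: the decoder reads $C$ in subexponential time, so we may truncate representations to length $2^{o(N)}$ without changing the decoder's behavior. Some care is needed that the relevant instance size is $\Theta(N)$ so that "subexponential" in the graph size is $2^{o(N)}$.

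Finally, the non-uniform algorithm is as follows. On input $\phi$, compute $G_\phi$ and run $\cD'(G_\phi,C_j)$ for each $j\le L$, accepting iff some run accepts. Correctness holds in both cases. On YES instances, some $C_j$ lies in $\mathcal S_\phi$. On NO instances, $\cD'$ never accepts on any representation, because the one-sided-error decoder of \Cref{obs:exp-to-prob-approx} only accepts after exhibiting a solution of value beyond $\rho k_N$, which cannot exist. The running time is $L\cdot 2^{o(N)}\le 2^{c_0N}$ for large $N$. This contradicts \Cref{assump:gap-eth-advice}, which gives the theorem for each of the three problems.
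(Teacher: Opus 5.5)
Your proposal is correct and follows essentially the same route as the paper: convert the approximation into a one-sided gap decoder via \Cref{obs:exp-to-prob-approx}, use group privacy against the empty graph to show that $e^{\eps m}\cdot\mathrm{poly}(N)/\xi$ independent samples from $\cE(G_\emptyset)$ form a covering that serves as the advice string, and then contradict Gap-ETH with Advice through the linear-size, bounded-degree reductions. The differences are only cosmetic: you union-bound over formulas rather than over all of $\cG_{n,d}$; you bound the advice length by truncating representations rather than by assuming the decoder's time is at least $|C|$; you treat the decoder as deterministic, whereas the paper samples its coins $r$ together with $C$; and you appeal to padding, whereas the paper simply includes a separate list $A_{n,k}$ for each of the polynomially many pairs $(n,k)$.
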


In fact, our lower bounds hold very generally for any problems which are ``Gap-3SAT-hard'', which roughly means that there exists a gap-preserving reduction from Gap-3SAT with linear size. (See \Cref{def:gap-3sat-hard} and \Cref{lem:lb-gap-3sat-hard} below for a more precise formulation.)

Crucially, the lower bound in \Cref{thm:main-rep-ind-lb} holds without any restriction on the running time of the encoder $\cE$. This means that we cannot simply run the encoder to arrive at a contradiction.

\subsection{From Privacy to Advice String}

Instead, what we show is that we can extract a small advice string from the encoder using only its privacy guarantee (\Cref{claim:advice-size}). This allows us to prove the following generic lemma, which turns any implicit representation scheme (with sufficiently small $\eps$), to a (non-private) algorithm that solves the problem using mildly exponential time and advice length.

\begin{lemma}[Main Lemma] \label{lem:ind-lb}
Let $\Pi$ be any optimization problem on graphs. 
Let $d \in \N$ and $\delta, \rho > 0$ be constants. Then, there exists a constant $\eps > 0$ (depending on $d, \delta$) such that the following holds.

If there is an $\eps$-DP implicit representation scheme which solves $\rho$-gap-$\Pi$ with one-sided error on all $d$-bounded-degree graphs such that the decoder runs in $2^{o(n)}$ time, then there is an $O(2^{\delta n})$-time deterministic algorithm with advice length $O(2^{\delta n})$ that solves $\rho$-gap-$\Pi$ on all $d$-bounded-degree graphs.
\end{lemma}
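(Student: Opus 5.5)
The idea is to use privacy to show that a small, input-independent family of representations is good for every input, and to hard-wire that family into the advice string. Fix $n$ and the parameter $k$ (in the graph problems here $k\le n$, so we may either fix $k$ in the advice or range over all $O(n)$ values and concatenate advice). Let $\cX_n$ be the set of $d$-bounded-degree graphs on vertex set $[n]$; each has at most $dn/2$ edges. Fix the empty graph $G_0$ as a reference. Any $G\in\cX_n$ is reachable from $G_0$ by a path of at most $dn/2$ edge insertions, so group privacy gives, for every set $\mathcal S$ of representations,
\[
\Pr[\cE(G;k)\in \mathcal S]\le e^{\eps d n/2}\,\Pr[\cE(G_0;k)\in\mathcal S].
\]
For a Yes-instance $G$, let $\mathcal S_G:=\{C : \cD(G,C)=\mathrm{YES}\}$. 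By one-sided error, $\Pr[\cE(G;k)\in\mathcal S_G]\ge\xi$. Hence $\Pr_{C\sim\cE(G_0;k)}[C\in \mathcal S_G]\ge \xi e^{-\eps dn/2}$ for \emph{every} Yes-instance $G\in\cX_n$.

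Next comes the covering step (\Cref{claim:advice-size}). Draw $t$ i.i.d.\ samples $C_1,\dots,C_t\sim\cE(G_0;k)$. For a fixed Yes-instance $G$, the probability that none lies in $\mathcal S_G$ is at most $(1-\xi e^{-\eps dn/2})^t\le \exp(-t\xi e^{-\eps dn/2})$. The number of graphs in $\cX_n$ is at most $\binom{n^2}{dn/2}\le 2^{O(dn\log n)}$. A union bound therefore shows that with $t=\xi^{-1}e^{\eps dn/2}\cdot O(dn\log n)$ some fixed tuple $(C_1,\dots,C_t)$ hits $\mathcal S_G$ for all Yes-instances simultaneously. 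Choosing $\eps$ small enough in terms of $d,\delta$ (say $\eps d/2<\delta\ln 2/2$) gives $t\le 2^{\delta n/2}$ for large $n$.

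The main obstacle is the length of each $C_i$, since the encoder is unrestricted. This is resolved by the paper's convention that the decoder reads its representation and runs in time at least its length. So we may truncate every $C_i$ to $2^{o(n)}$ bits, the prefix the decoder can actually read on $n$-vertex inputs, without changing $\cD(G,C_i)$. The advice is then the list of truncated $C_i$, of total length $t\cdot 2^{o(n)}=O(2^{\delta n})$. If one worries that truncation interacts with which prefix is read, note that a machine running in time $T$ reads only the first $T$ bits on every input, so the truncation is uniform across all $G$.

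The deterministic algorithm, on input $G$, runs $\cD(G,C_i)$ for $i=1,\dots,t$ and answers YES iff some run says YES. Its running time is $t\cdot 2^{o(n)}=O(2^{\delta n})$. On Yes-instances some $C_i\in\mathcal S_G$, so it answers YES. On No-instances one-sided error means $\cD(G,C)$ answers NO for every $C$ in the support of $\cE(G;k)$. To apply this we need $\cD(G,C_i)=\mathrm{NO}$ for our $C_i$, which lie in the support of $\cE(G_0;k)$. Pure $\eps$-DP with group privacy makes the supports of $\cE(G;k)$ and $\cE(G_0;k)$ identical up to measure zero, so it suffices to pick the $C_i$ from the common support, for example by conditioning the sampling on that full-measure set. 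Alternatively, one can modify $\cD$ to verify solutions as in \Cref{obs:exp-to-prob-approx}, which makes NO-soundness hold for every $C$.
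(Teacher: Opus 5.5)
Your proposal is correct and follows essentially the same route as the paper. Both arguments use group privacy relative to the empty graph, sample about $e^{\eps dn/2}/\xi$ (times a polynomial factor) representations from $\cE(G_\emptyset)$, and take a union bound over all $d$-bounded-degree graphs to get one covering set that serves as the advice. The algorithm then runs the decoder on every advice entry. The remaining differences are cosmetic. The paper samples pairs $(r, C)$ so that a randomized decoder is also covered; you can handle this equally well by folding $r$ into the representation. The paper uses the cruder count $2^{n^2}$ for the number of graphs, and it bounds $|C|$ directly by the decoder's running time instead of truncating. Finally, the paper asserts the rejection guarantee without the measure-zero/support argument that you spell out.
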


\begin{proof}
We use the notation for maximization problem $\Pi$; the minimization setting works similarly.

Let $\eps = \delta/d$, and let $\cG_{n, d}$ be the set of all $n$-vertex graphs with $d$-bounded degree. 

Assume, for the sake of contradiction, that there exists an $\eps$-DP implicit representation scheme $(\cE, \cD)$ that solves $\rho$-gap-$\Pi$ for all $G \in \cG_{n, d}$ such that $\cD$ runs in $2^{o(n)}$ time. Let $\cC$ denote the set of all possible representations.

Let $M$ denote the number of random bits used by $\cD$. For clarity, we write a subscript $\cD_r$ to denote that $\cD$ is run using randomness $r \in \{0, 1\}^M$. 

The main claim is the following which guarantees the existence of the advice set.

\begin{claim} \label{claim:advice-size}
There exists a set $A_{n, k}$ of pairs $(r, C) \in \{0, 1\}^M \times \cC$ such that the following holds.
\begin{itemize}[nosep]
\item (Acceptance Guarantee) For every $G \in \cG_{n, d}$ with $\opt(G) = k$, there exists $(r^*, C^*) \in A_{n, k}$ such that $\cD_{r^*}(G, C^*) = $ YES.
\item (Rejection Guarantee) For every $G \in \cG_{n, d}$ with $\opt(G) < \rho \cdot k$ and for all $(r, C) \in A_{n, k}$, we have $\cD_{r}(G, C) = $ NO.
\item (Size Guarantee) $|A_{n, k}| \leq e^{(\eps d / 2 + o(1)) n}$.
\end{itemize}
\end{claim}

\begin{proof}[Proof of \Cref{claim:advice-size}]
From the assumption of $(\cE, \cD)$, for some constant $\xi > 0$, we have
\begin{align}
&\Pr_{r \sim \{0, 1\}^M, \cE}[\cD_r(G, \cE(G)) \text{ = } \mathit{YES}] \geq \xi &&\forall G \in \cG_{n, d} \text{ s.t. } \opt(G) = k, \label{eq:rejection} 
\end{align}
Let $G_\emptyset$ be the empty graph on $n$ vertices.
Since each $G$ has at most $nd/2$ edges, we may apply group privacy to conclude that, for any event $\cO$, we have
\begin{align*}
\Pr[\cE(G) \in \cO] \leq e^{\eps n d / 2} \cdot \Pr[\cE(G_\emptyset) \in \cO] 
\end{align*}
Applying this to \eqref{eq:rejection}
, we arrive at
\begin{align}
&\Pr_{r \sim \{0, 1\}^M, \cE}[\cD_r(G, \cE(G_\emptyset)) \text{ = } \mathit{YES}] \geq \frac{\xi}{e^{\eps n d / 2}} &&\forall G \in \cG_{n, d} \text{ s.t. } \opt(G) = k, \label{eq:rejection-empty} 
\end{align}
Let $A_{n, k}$ be the set of $B := \left\lceil \frac{e^{\eps n d / 2}}{\xi} \right\rceil \cdot n^2$ independent pairs $(r, C)$ where $r$ is drawn uniformly at random from $\{0, 1\}^M$ and $C$ from $\cE(G_\emptyset)$. Obviously, this satisfies the size guarantee. The rejection guarantee follows from the fact that the algorithm has one-sided error.

Finally, we will show that it satisfies the acceptance guarantee with non-zero probability, which would conclude our proof. To see that this is true, consider any fixed $G \in \cG_{n, d}$ with $\opt(G) = k$. From \eqref{eq:rejection-empty}, we have
\begin{align*}
\Pr[\exists (r^*, C^*) \in A_{n, k}, \cD_{r^*}(G, C^*) = \mathit{ YES}] &\geq 1 - \Paren{1 - \frac{\xi}{e^{\eps n d / 2}}}^B \geq 1 - e^{-B \cdot \frac{\xi}{e^{\eps n d / 2}}} \geq 1 - e^{-n^2}.
\end{align*}
Thus, by the union bound over all $G \in \cG_{n, d}$ (using the trivial bound $|\cG_{n, d}| \leq 2^{n^2}$), the probability that the acceptance guarantee is satisfied is at least 
\begin{align*}
1 - e^{-n^2} \cdot |\cG_{n, d}| \geq 1 - e^{-n^2} \cdot 2^{n^2} > 0,
\end{align*}
which concludes our proof.
\end{proof}

Our deterministic algorithm for solving $\rho$-gap-$\Pi$ works as follows:
\begin{itemize}[nosep]
\item Let the advice string be the union of $A_{n, k}$ for all valid $k \leq n^{O(1)}$.
\item For every $(r, C) \in A_{n, k}$, we run the algorithm $\cD_r(G, C)$.
\item If any of these runs returns a YES, then outputs YES.
\item Otherwise, output NO.
\end{itemize}
The correctness immediately follows from the acceptance and rejection guarantees from \Cref{claim:advice-size}. The advice size is $|A_{n, k}| \cdot 2^{o(n)} \leq e^{(\eps d / 2 + o(1)) n} \leq O(2^{\delta n})$. Finally, since we assume that $\cD$ runs in $2^{o(n)}$ time, the total running time of the algorithm is $|A_{n, k}| \cdot 2^{o(n)} \leq O(2^{\delta n})$.
\end{proof}

\subsection{Lower Bounds for Gap-3SAT-hard Problems}


With \Cref{lem:ind-lb} in mind, we can derive a contradiction with Gap-ETH with Advice by simply establishing a reduction from Gap-3SAT (\Cref{def:gap-3sat}). To provide a unified treatment for all the problems, we state the requirements for such reductions below.

\begin{definition}[Gap-3SAT-hard] \label{def:gap-3sat-hard}
We say that a maximization (resp. minimization) problem $\Pi$ on graphs is Gap-3SAT-hard if, for every constants $\gamma \in (0, 1), D \ge 3$, there is a polynomial-time reduction that takes in a $\gamma$-Gap-3SAT($D$) instance  $\phi$ and returns an instance $(G, k)$ of $\Pi$ such that the following holds for some constants $C, d > 0$ and $\rho < 1$ (resp. $\rho > 1$) depending only on $\gamma, D$.
\begin{itemize}[nosep]
\item (Completeness) If $\phi$ is satisfiable, then $\opt(G) = k$.
\item (Soundness) If $\phi$ is not $\gamma$-satisfiable, then $\opt(G) < \rho \cdot k$ (resp. $\opt(G) > \rho \cdot k$).
\item (Size Bound) The number of vertices of $G$ is $\leq CN$ where $N$ is the number of variables in $\phi$.
\item (Degree Bound) $G$ is $d$-bounded-degree.
\end{itemize}
\end{definition}

Many of the hardness of approximation reductions already satisfy these properties. Indeed, it is well known that the three problems we study are hard in this sense:

\begin{lemma} \label{thm:all-prob-3sat-hard}
Maximum Independent Set, Vertex Cover and Triangle Packing are Gap-3SAT-hard.
\end{lemma}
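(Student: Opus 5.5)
We verify the four requirements of \Cref{def:gap-3sat-hard} for each problem using classical linear-size reductions.

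\textbf{Maximum Independent Set.} We use the FGLSS/Papadimitriou--Yannakakis clause-gadget reduction. Given a $\gamma$-Gap-3SAT($D$) instance $\phi$ with $N$ variables and $m \le DN/3$ clauses, we build $G$ with one vertex per (clause, literal-occurrence) pair. Each clause contributes a clique of size at most $3$. We add an edge between every pair of complementary literal occurrences in different clauses, and set $k = m$.

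\begin{itemize}[nosep]
\item \emph{Size bound.} $G$ has at most $3m \le DN$ vertices.
\item \emph{Degree bound.} Each vertex has at most $2$ clique neighbours plus at most $D$ conflict neighbours, so $G$ is $(D+2)$-bounded-degree.
\item \emph{Completeness.} A satisfying assignment selects one true literal per clause, giving an independent set of size $m$. Since the cliques partition $V$, $\opt(G) \le m$ always holds, so $\opt(G) = k$.
\item \emph{Soundness.} Any independent set picks at most one literal per clause, with no complementary pair. It therefore induces a consistent partial assignment satisfying as many clauses as the set has vertices. Hence $\opt(G) \le \gamma m$, and we take $\rho = \gamma' \in (\gamma, 1)$, giving $\opt(G) < \rho k$.
\end{itemize}

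\textbf{Vertex Cover.} We use the same graph with $k = |V| - m$. Complements of independent sets are exactly vertex covers, so completeness gives $\opt = |V| - m$. Soundness gives $\opt \ge |V| - \gamma m \ge k + (1-\gamma)m$. Since $|V| \le 3m$, we have $k \le 2m$, so $\opt \ge k\,(1 + (1-\gamma)/2)$. This yields some $\rho > 1$ depending only on $\gamma$.

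\textbf{Triangle Packing.} This is the step I expect to be the main obstacle, because it needs a genuine gadget reduction rather than a complementation. The plan is to follow the APX-hardness proofs of \cite{Kann91,GuruswamiL17} for bounded degree, starting from bounded-occurrence 3-dimensional matching. Gap-3SAT($D$) reduces with linear blowup and constant gap to Max-3DM with bounded occurrences, via Kann's reduction, whose instance size is $O(N)$ because occurrences are bounded. Bounded-occurrence 3DM then reduces to triangle packing in bounded-degree graphs: each triple is replaced by a constant-size gadget, so that a triple chosen in the matching corresponds to one extra triangle in the packing. We set $k$ to the number of gadget triangles achievable in the perfect case.

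Linearity of all these steps gives the size bound, and the constant gadget size gives the degree bound. The real work is soundness: we must check that a packing of size at least $\rho k$ can be rounded back to a large matching, hence to a good assignment. Each gadget locally loses a constant number of triangles per unsatisfied constraint, so a $(1-\gamma)$ fraction of violated clauses translates into an $\Omega(1)$ fraction loss in $\opt$, relative to $k = \Theta(N)$. Because all constants depend only on $\gamma$ and $D$, every requirement of \Cref{def:gap-3sat-hard} is met.
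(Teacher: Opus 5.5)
\paragraph{Independent Set and Vertex Cover.} These parts are correct and differ from the paper only in the gadget. You use the literal-occurrence graph (clause cliques of size at most $3$ plus edges between complementary occurrences). The paper uses the FGLSS graph, with one vertex per satisfying assignment of each clause. Your version gives a smaller degree bound ($D+2$ versus $21D$). Your computation $\opt \ge k\,(1+(1-\gamma)/2)$ via $|V|\le 3m$ also supplies the constant $\rho>1$ for Vertex Cover that the paper leaves implicit.

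\paragraph{Triangle Packing.} Here you take a genuinely different route: 3SAT($D$) $\to$ bounded-occurrence 3DM $\to$ bounded-degree triangle packing, in the style of Kann. The paper instead goes 3SAT $\to$ 1-in-3SAT (Schaefer) $\to$ degree-$4$ triangle packing (van Rooij et al.), and cites Lemma~27 of Gupta et al., which is already stated in gap form, for completeness and soundness. Your route can be made to work, but as written there is a gap.
\begin{itemize}[nosep]
\item Your sentence ``each gadget locally loses a constant number of triangles per unsatisfied constraint'' only argues the easy direction (assignment $\to$ packing). Soundness needs the converse: every large packing must decode to a good assignment.
\item APX-hardness via L-reductions does not by itself give the perfect completeness $\opt(G)=k$ that \Cref{def:gap-3sat-hard} requires.
\end{itemize}

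Two facts fill the gap.

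\emph{First fact.} Use the Garey--Johnson nine-vertex gadget per triple. Then:
\begin{itemize}[nosep]
\item every triangle lies inside one gadget together with its three element vertices;
\item a gadget hosts four packed triangles only if it covers all three of its element vertices, since four triangles need all $12$ vertices;
\item otherwise its packed triangles can be swapped for its three internal ones.
\end{itemize}
Hence $\opt_{\mathrm{TP}} = 3T + \opt_{\mathrm{3DM}}$ exactly, where $T$ is the number of triples.

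\emph{Second fact.} In a ring-based 3DM construction (Garey--Johnson variable rings over each variable's occurrences, no garbage collection), a matching can break a variable ring at the cost of a single ring triple. Doing so frees literal elements of both polarities. So the loss is not charged per unsatisfied clause but per broken ring. Soundness survives because a ring has at most $D$ positions. Repairing a broken ring to a consistent one therefore loses at most $D$ clause triples while regaining at least one ring triple. This gives $\#\mathrm{sat} \ge m - D\,(k_{\mathrm{3DM}} - \mathrm{size})$, and hence $\opt_{\mathrm{3DM}} \le k_{\mathrm{3DM}} - (1-\gamma)m/D$ in the No case.

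With these two facts your chain has perfect completeness, linear size and bounded degree. The paper's route reaches the same conclusion with degree $4$, but only by citation.
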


\begin{proof}[Proof Sketch]
\begin{itemize}
\item \textbf{Maximum Independent Set:} We use the FGLSS reduction~\cite{FeigeGLSS96}. Given a 3-CNF formula $\phi$, we construct a graph $G$ as follows. For each clause $C_j$, we create 7 vertices, each corresponding to one of the 7 satisfying assignments of $C_j$. We add edges between any two vertices if they assign conflicting truth values to a shared variable. It is simple to see that the optimal independent set size is exactly the same as the maximum number of clauses satisfied by any assignment. Since every variable in the 3-CNF formula appears in at most $D$ clauses, the number of vertices in $G$ is at most $O(ND)$ and then the resulting graph is $(21D)$-degree-bounded. 
\item \textbf{Vertex Cover:} This follows directly from the above reduction by recalling that, in any graph $G = (V, E)$, a set $S$ is a vertex cover iff $V \setminus S$ is an independent set. 
\item \textbf{Triangle Packing:} We first apply Schaefer's classic reduction from 3SAT to 1-in-3SAT~\cite{Schaefer78}, and then apply the reduction by van Rooij et al.~\cite{RooijNB13} to Triangle Packing on 4-bounded-degree graphs. The completeness and soundness were proved in \cite[Lemma 27]{GuptaLLMW19}. Since both reductions simply apply local gadgets, the size bound also holds. \qedhere
\end{itemize}
\end{proof}

As mentioned earlier, such a reduction, together with \Cref{lem:ind-lb}, immediately implies a Gap-ETH-based lower bound against implicit representation schemes for $\Pi$, as formalized below.

\begin{lemma} \label{lem:lb-gap-3sat-hard}
Let $\Pi$ be any maximization (resp. minimization) problem on graphs that is Gap-3SAT-hard. Then, assuming Gap-ETH with Advice, there exist constants $\eps > 0$ and $d \in \N$ and $\alpha < 1$ (resp. $\alpha > 1$) such that no $\eps$-DP implicit representation scheme with a subexponential-time decoder can achieve an $\alpha$-approximation for $\Pi$ even on $d$-bounded-degree graphs.
\end{lemma}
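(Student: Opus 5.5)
We prove the contrapositive by chaining Observation~\ref{obs:exp-to-prob-approx}, Lemma~\ref{lem:ind-lb}, and the Gap-3SAT-hardness reduction into an algorithm that violates Gap-ETH with Advice. We treat the maximization case; the minimization case is symmetric, with the inequalities flipped.

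\textbf{Fixing the constants.} Let $D, \gamma, c_0$ be the constants from Assumption~\ref{assump:gap-eth-advice}. Apply Definition~\ref{def:gap-3sat-hard} with these $\gamma, D$. This yields a reduction with constants $C, d$ and $\rho < 1$. Take $\alpha \in (\rho, 1)$, for instance $\alpha = (1+\rho)/2$. Choose $\delta > 0$ small enough that $\delta C < c_0$. Then take $\eps = \eps(d,\delta)$ as provided by Lemma~\ref{lem:ind-lb}.

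\textbf{Chaining the reductions.} Suppose, for contradiction, that there is an $\eps$-DP implicit representation scheme with a subexponential-time decoder achieving an $\alpha$-approximation for $\Pi$ on $d$-bounded-degree graphs.
\begin{itemize}
\item Observation~\ref{obs:exp-to-prob-approx} (which applies since $\alpha > \rho$) converts it into an $\eps$-DP scheme with a subexponential-time decoder that solves $\rho$-gap-$\Pi$ with one-sided error on these graphs. One small check: the observation's Markov argument uses $\val \le k$. In the YES case of the gap problem $\opt(G) = k$, so every valid solution has $\val \le k$ and the argument goes through.
\item Lemma~\ref{lem:ind-lb} then gives, for each number of vertices $n$, a deterministic algorithm running in time $O(2^{\delta n})$ with advice of length $O(2^{\delta n})$ that solves $\rho$-gap-$\Pi$ on $d$-bounded-degree $n$-vertex graphs.
\end{itemize}

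\textbf{Solving Gap-3SAT.} Given a $\gamma$-Gap-3SAT($D$) instance $\phi$ on $N$ variables, we proceed as follows.
\begin{itemize}
\item Run the polynomial-time reduction to obtain $(G,k)$, where $G$ has $n \le CN$ vertices and is $d$-bounded-degree.
\item Run the advice algorithm on $(G,k)$. By completeness and soundness of the reduction, its answer decides $\phi$.
\end{itemize}
The total running time is $\mathrm{poly}(N) + O(2^{\delta C N}) = O(2^{c_0 N})$.

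\textbf{The main obstacle: the advice may depend only on $N$.} The number of vertices $n$ of $G$ is not determined by $N$ alone; it is only bounded by $CN$. To handle this, let the advice for length $N$ be the concatenation of the advice strings for every $n \le CN$, covering every $k \le n^{O(1)}$ (Lemma~\ref{lem:ind-lb} already takes the union over $k$). The total length is
\[
\sum_{n \le CN} O(2^{\delta n}) = O(2^{\delta C N}) \le O(2^{c_0 N}).
\]
If needed, we can also pad $G$ with isolated vertices to a canonical size. This is harmless for the three problems of interest as long as the padding preserves $\opt$, but concatenating the advice over all $n$ avoids the issue entirely.

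The resulting algorithm solves $\gamma$-Gap-3SAT($D$) in time $O(2^{c_0N})$ with advice of length $O(2^{c_0N})$, contradicting Assumption~\ref{assump:gap-eth-advice}. This proves the lemma.
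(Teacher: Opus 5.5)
Your proposal is correct and follows essentially the same route as the paper. You fix $\delta$ with $\delta C < c_0$ (the paper takes $\delta = c_0/(2C)$) and pick $\alpha \in (\rho,1)$. You then chain Observation~\ref{obs:exp-to-prob-approx} with Lemma~\ref{lem:ind-lb}, compose with the Gap-3SAT-hardness reduction, and handle the ``advice depends only on $N$'' issue by concatenating advice over all $n \le CN$ and all $k$, exactly as the paper's footnote does.
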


\begin{proof}
Let $\gamma, c_0, D$ be constants from Gap-ETH with Advice (\Cref{assump:gap-eth-advice}), and let $C, d, \rho$ be constants from the Gap-3SAT-hardness reduction (\Cref{def:gap-3sat-hard}). Let $\delta = \frac{c_0}{2C}$, let $\eps$ be as in \Cref{lem:ind-lb}, and let $\alpha$ be any constant in $(\rho, 1)$ (resp. $(1, \rho)$).

Suppose for the sake of contradiction that there exists an $\eps$-DP implicit representation scheme with a $2^{o(n)}$-time decoder that can achieve an $\alpha$-approximation for $\Pi$ even on $d$-bounded-degree graphs. By \Cref{obs:exp-to-prob-approx}, this yields an $\eps$-DP implicit representation scheme with a $2^{o(n)}$-time decoder that solves $\rho$-gap-$\Pi$. From this, \Cref{lem:ind-lb} implies that there exists an $O(2^{\delta n})$-time deterministic algorithm $\alg$ with advice length $O(2^{\delta n})$ that solves $\rho$-gap-$\Pi$ on all $d$-bounded-degree graphs.

We can use this to solve the $\gamma$-Gap-3SAT($D$) problem as follows:
\begin{itemize}[nosep]
\item Given an instance $\phi$ of $N$ variables, run the reduction in \Cref{def:gap-3sat-hard} to arrive at $(G, k)$ where $G$ is $d$-bounded-degree and has $n \leq C \cdot N$ vertices.
\item Then, run the algorithm $\alg$ on $G$.
\item Finally, return the output of $\alg$.
\end{itemize}
The correctness of this algorithm immediately follows from the completeness and soundness in \Cref{def:gap-3sat-hard}. As for the running time and advice string\footnote{For the advice string, we use all advice strings of $\alg$ for all $n \leq C \cdot N$ and all possible values of $k \leq n^{O(1)}$.}, both are $2^{\delta n} \cdot n^{O(1)} \leq O(2^{c_0 N})$. Hence, this violates Gap-ETH with Advice.
\end{proof}

Finally, we remark that \Cref{thm:main-rep-ind-lb} is simply a consequence of \Cref{thm:all-prob-3sat-hard} and \Cref{lem:lb-gap-3sat-hard}.
\section{Representation-Dependent Lower Bounds}
\label{sec:rep-lower-bounds}

While our representation-independent lower bounds in Section~\ref{sec:lower-bounds} are very robust, they only apply to small $\eps$, which is the regime where our algorithms provide weak (or even trivial) guarantees. In this section, we focus instead on representation-dependent lower bounds, where we restrict the format of the implicit representation to be the same as our algorithms in \Cref{sec:alg}. In doing so, 
%
%
we show a more refined privacy-approximation tradeoff which holds even for larger values of $\eps$. 

\subsection{Permutation Representation}

We start with the permutation representation for Vertex Cover and $d$-Hitting Set.
Recall that in the permutation framework (Section~\ref{sec:vc}), the implicit representation is a permutation $\pi$ of the vertices, and the decoder picks the earliest element of each hyperedge according to $\pi$.

For this specific representation, we prove the following lower bound.
 
\begin{theorem} \label{thm:rep-lb-noisy}
Let $d \ge 2$. Any $\eps$-DP implicit representation scheme under the permutation representation (\Cref{def:perm-rep}) for $d$-Hitting Set has approximation ratio at least $(1 + \Omega(e^{-\eps}))$.
\end{theorem}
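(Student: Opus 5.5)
We prove the bound for Vertex Cover ($d=2$); a $d$-Hitting Set instance may contain only size-2 hyperedges, so the bound extends to all $d \ge 2$. The plan is a packing argument in the style of \cite{GLMRT10}, following the informal description in the introduction. Build a family of instances obtained from one another by a single edge change, each with a small optimum. Then use $\eps$-DP to show that the permutation distribution on one instance is close to that on a neighbor, and conclude that the output must frequently order some pair badly.

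The first gadget to try is a star. Let $G_v$ be the star on $n$ vertices centered at $v$, so $\opt = 1$. Since $G_v$ and $G_u$ differ in $2(n-2)$ edges, group privacy directly is too weak. Instead use a union of many disjoint small gadgets, where each gadget's cost depends on a single ordering event, and average the losses.

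The concrete gadget is a path $a - b - c$, whose optimum is $\{b\}$ with cost $1$. The decoder outputs a cover of size $2$ exactly when $b$ is not the earliest of the three vertices in $\pi$. Now compare with the neighbor in which edge $ab$ is removed, leaving only the edge $bc$. By symmetry between gadgets we would like to force $\pi$ to put $c$ before $b$ with constant probability, but a single edge $bc$ does not force anything. We therefore pair the gadget with its mirror image.

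Take the triangle-free gadget on vertices $\{a,b,c\}$ with the two instances
\[
P_b \text{ with edges } ab,\, bc, \qquad P_a \text{ with edges } ab,\, ac.
\]
These differ in two edges, which costs a factor $e^{2\eps}$ under group privacy. The event $F_b = \{b \text{ is not first among } a,b,c\}$ adds $1$ to the cost on $P_b$, and the event $F_a$ adds $1$ on $P_a$. Since at most one vertex comes first, $F_a \cup F_b$ always occurs, so $\Pr[F_a] + \Pr[F_b] \ge 1$ under any single distribution. Group privacy then gives
\[
\Pr_{P_b}[F_b] + e^{2\eps}\Pr_{P_b}[F_a] \ge 1,
\]
and symmetrically for $P_a$. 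To make this hold uniformly, take $G$ to be a disjoint union of $k$ gadgets, each in a $P_b$ or a $P_a$ configuration, so that $\opt = k$. For gadget $j$, switching its configuration changes two edges.

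Assume the scheme has expected cost at most $(1+\beta)k$ on every such instance. Then on every configuration $\sigma$, the expected number of failed gadgets is at most $\beta k$. Fix $\sigma$ and gadget $j$, and let $\sigma^j$ be $\sigma$ with gadget $j$ flipped. Write $p$ for the failure probability of gadget $j$ under $\sigma$ and $p'$ for its failure probability under $\sigma^j$. The complementarity argument above gives $p + e^{2\eps}p' \ge 1$. Averaging over a uniformly random $\sigma$ and over $j$, the expected failure fraction $\bar p$ satisfies $\bar p(1+e^{2\eps}) \ge 1$. Hence $\beta \ge \bar p \ge 1/(1+e^{2\eps}) = \Omega(e^{-2\eps})$.

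This gives the weaker bound $1+\Omega(e^{-2\eps})$. The main obstacle is removing the factor of $2$ in the exponent. The fix I would try is a gadget whose two configurations differ in a single edge. One candidate is a star $K_{1,2}$ centered at $b$ versus one with an extra pendant edge making $a$ also necessary. Another is simply to rescale $\eps$, since the statement's $\Omega(\cdot)$ is in the exponent only up to constants and $e^{-2\eps}$ may be acceptable up to reparametrization. If neither works, I would search for a single-edge pair of gadgets with $\opt$ equal in both whose failure events are still complementary.
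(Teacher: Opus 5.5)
Your gadget analysis and averaging are correct, but they only give $1+\Omega(e^{-2\eps})$, which is not the theorem. The $\Omega(\cdot)$ hides a multiplicative constant in front of $e^{-\eps}$, not a constant in the exponent. Since $e^{-2\eps}/e^{-\eps}=e^{-\eps}\to 0$, your bound is strictly weaker for large $\eps$, which is the regime where this bound is compared against the $\tO(1/\eps)$ upper bound. No ``reparametrization'' is available, because the scheme is $\eps$-DP, not $(\eps/2)$-DP. Your other suggested fixes are not carried out. In particular, the pendant-edge variant raises the optimum to $2$, so it is unclear what it would be compared against. A minor point: your first group-privacy display should read $\Pr_{P_b}[F_b]+e^{2\eps}\Pr_{P_a}[F_a]\ge 1$. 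With $P_b$ in both terms it is vacuous.

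The missing idea is to stop comparing $P_a$ with $P_b$ directly. Instead, route both through their common neighbor $Q$, the gadget consisting of the single edge $ab$ (optimum $1$, no failure possible).
\begin{itemize}
\item Your complementarity observation holds under any single distribution, in particular under $\cE(Q)$, so $\Pr_Q[F_a]+\Pr_Q[F_b]\ge 1$.
\item $P_a$ and $P_b$ each differ from $Q$ in one edge, so $\Pr_{P_a}[F_a]+\Pr_{P_b}[F_b]\ge e^{-\eps}\bigl(\Pr_Q[F_a]+\Pr_Q[F_b]\bigr)\ge e^{-\eps}$.
\item The same holds gadget by gadget in your disjoint union: put gadget $j$ in state $Q$ and keep the others fixed. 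Your averaging then gives $\bar p\ge e^{-\eps}/2$.
\end{itemize}

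This is exactly the paper's proof. For each component it keeps a fixed hyperedge $H_{j,1}$. Only when $Y_j=1$ does it add $H_{j,2}$, through a uniformly random vertex $w_j\in H_{j,1}$. When $Y_j=0$, the encoder's output is independent of $w_j$, so $\min_\pi(H_{j,1})\ne w_j$ with probability at least $1-1/d$. That event forces an excess vertex once $H_{j,2}$ is present. A single-edge DP step transfers this to $Y_j=1$, giving expected excess at least $ke^{-\eps}/4$.
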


\begin{proof}
Let $n = (2d - 1) k$. We partition the vertex set $V = [n]$ into $k$ disjoint components $V_1, \dots, V_k$, each of size $2d - 1$. For each component $j \in [k]$, we choose a fixed hyperedge $H_{j, 1} \subseteq V_j$ of size $d$. 

Let $w_j$ be a vertex chosen uniformly at random from $H_{j, 1}$.  Let $H_{j, 2}$ be $\{w_j\} \cup (V_j \setminus H_{j, 1})$; thus, $H_{j, 1} \cap H_{j, 2} = \{w_j\}$. Let $Y \in \{0, 1\}^k$ be a uniformly random binary vector. The hypergraph instance $G_{Y}$ consists of the hyperedges $H_{j, 1}$ for all $j \in [k]$, and additionally contains  $H_{j, 2}$ whenever $Y_j = 1$. Since every hyperedge has size $d$, this is a valid $d$-Hitting Set instance.

For each component $j$, if $Y_j = 0$, the only hyperedge is $H_{j, 1}$, which can be hit by one vertex. If $Y_j = 1$, by construction, both $H_{j, 1}$ and $H_{j, 2}$ can be hit by $\{w_j\}$. Thus, $\opt(G_Y) = k$.

Consider any $\eps$-DP implicit representation scheme $(\cE, \cD)$ under the permutation representation (\Cref{def:perm-rep}).
Let $\pi = \mathcal{E}(G_Y)$ be the output permutation. The decoder picks the earliest vertex from each hyperedge. For component $j$, the decoder picks $u_{j,1} = \min_\pi(H_{j, 1})$ and (if $Y_j = 1$) $u_{j,2} = \min_\pi(H_{j, 2})$. If $u_{j,1} \neq u_{j,2}$, the decoder picks two vertices, incurring an excess cost of 1 for this component. Let $U_j$ be the indicator that $u_{j,1} \neq u_{j,2}$. The total excess cost is $X = \sum_{j=1}^k Y_j U_j$.

Let $Y_{-j}$ be $Y$ with the $j$th coordinate removed.  We will condition on $Y_{-j}$ and the 
random choices of $w$.  Let 
$G_{Y_{-j}, b}$ be the graph when
$Y_j = b$, for $b = 0, 1$; note that $G_{Y_{-j}, 0}$ and 
$G_{Y_{-j}, 1}$ are neighbors.  
Since $\mathcal{E}$ is $\eps$-DP, we have 
\begin{align*}
\E_{Y_j, \mathcal{E}}[ Y_j U_j \mid Y_{-j}, w] 
& = 
\frac{1}{2} 
\E_{\mathcal{E}}[U_j \mid G_{Y_{-j}, 1}, w] 
=
\frac{1}{2} \Pr_{\mathcal{E}}[U_j = 1 \mid G_{Y_{-j}, 1}, w] \\
& \geq \frac{1}{2} e^{-\eps} \Pr_{\mathcal{E}}[U_j = 1 \mid G_{Y_{-j}, 0}, w] 
= \frac{1}{2} e^{-\eps} \E_{\mathcal{E}}[U_j \mid G_{Y_{-j}, 0}, w] \\
& = e^{-\eps} \E_{Y_j, \mathcal{E}}[ (1 - Y_j) U_j \mid Y_{-j}, w]. 
\end{align*}
Taking expectation over
$Y_{-j}, w$, summing over
$j \in [k]$, and applying
the linearity of
expectation,
\begin{equation}
\label{eq:rdlb-perm-1}
\E_{Y, w, \mathcal{E}}[X] \ge e^{-\eps} \E_{Y, w, \mathcal{E}}\left[\sum_{j=1}^k (1 - Y_j) U_j\right].
\end{equation}
We now obtain a lower bound on the RHS.  Indeed, conditioned on $Y_j = 0$, the hyperedge $H_{j, 2}$ is absent from $G_{Y_{-j}, 0}$, meaning the $\mathcal{E}$'s output $\pi$ is independent of the choice of $w_j$ and the random vertices of $H_{j, 2}$. The vertex $u_{j,1} = \min_\pi(H_{j, 1})$ is determined by $\pi$. If $u_{j,1} \neq w_j$, then $u_{j,1} \notin H_{j, 2}$, which guarantees that $u_{j,2} = \min_\pi(H_{j, 2}) \neq u_{j,1}$. Since $w_j$ is chosen uniformly at random from $H_{j, 1}$, we have:
\[
\Pr_{w_j}[U_j = 1 \mid \pi, Y_j = 0] \ge \Pr_{w_j}[u_{j,1} \neq w_j \mid \pi, Y_j = 0] = 1 - \frac{1}{d}.
\]
Taking the expectation over $\pi \sim \mathcal{E}(G_{Y} \mid Y_j = 0)$ gives $\E_{w_j, \mathcal{E}}[U_j \mid Y_j = 0] \ge 1 - 1/d$. Since $Y_j$ is chosen independently, we obtain:
\begin{equation}
\label{eq:rdlb-perm-2}
\E_{Y, w, \mathcal{E}}\left[\sum_{j=1}^k (1 - Y_j) U_j\right] \ge \frac{k}{2} \left( 1 - \frac{1}{d} \right) 
\geq \frac{k}{4},
\end{equation}
for any $d \ge 2$. Combining
\eqref{eq:rdlb-perm-1} and \eqref{eq:rdlb-perm-2}, we obtain $\E_{Y, w, \mathcal{E}}[X] \ge  \frac{k e^{-\eps}}{4}$.
Since $\opt(G_Y) = k$, by averaging, there is an instance $G$ such that  the expected size of the decoded solution is at least $\opt(G) \cdot \Paren{1 + \frac{e^{-\eps}}{4}}$.   
\end{proof}

\subsection{Coloring Representation}
\label{subsec:color_coding_lb}

Recall that in the coloring framework (Section~\ref{sec:color-coding}), the implicit representation is a $k$-coloring $c: V \to [k]$, and the decoder finds one monochromatic copy of $H$ per color class.
Recall from \Cref{alg:color-coding} that $\scr(G, c)$ is the number of colors $i \in [k]$ with at least one monochromatic copy of $H$. Note that $\scr(G, c)$ is exactly the value of the solution returned by the decoder (given the representation $c$).

The following lower bound shows that we need $\eps \geq \Omega(\ln k)$ to achieve a constant factor approximation under this representation, which complements our upper bound (\Cref{thm:color-coding}). 

\begin{theorem} \label{thm:rep-lb-color}
Let $H$ be any fixed connected graph with at least two vertices. There exists a constant $c > 0$ (depending on $H$) such that, for $\eps \leq c \cdot \log k$, any $\eps$-DP implicit representation scheme under the coloring representation (\Cref{def:coloring-rep}) for $H$-Packing has approximation ratio at most $2/3$.
\end{theorem}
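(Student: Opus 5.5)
We prove \Cref{thm:rep-lb-color} with a packing argument in the style of \cite{HardtT10}, similar to \Cref{thm:rep-lb-noisy}. First fix an input with many placements of the $k$ disjoint copies of $H$. Privacy forces the encoder's coloring to look almost the same on all of them. A fixed coloring, however, can be good for only a few placements.

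\textbf{Construction.} Let $q = |V(H)|$ and take $n = qk$ vertices. For a partition $P = (S_1,\dots,S_k)$ of $V$ into blocks of size $q$, let $G_P$ be the disjoint union of copies of $H$ placed on $S_1,\dots,S_k$. Then $\opt(G_P) = k$. Each $G_P$ has $m_H k$ edges, where $m_H = |E(H)|$. Let $G_\emptyset$ be the empty graph on $V$. Group privacy gives, for every event $\cO$,
\[
\Pr[\cE(G_P)\in\cO] \le e^{\eps m_H k}\Pr[\cE(G_\emptyset)\in \cO].
\]
This bound is too weak when $\eps \approx c\log k$. So I would instead compare $G_P$ with a partial graph. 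Keep the copies on blocks $S_{k/2+1},\dots,S_k$ fixed, and randomize only a few blocks at a time. The cleaner route is the following.

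\textbf{Localized version.} Take $G$ to consist of $k-t$ fixed copies of $H$, plus $t$ "free" copies placed on a random set of $tq$ vertices inside a pool $W$ of size about $N_0$. Here $t$ is a small constant fraction of $k$, and $N_0 = k^{C}$ extra isolated vertices are added so that the random placement has high entropy. Compare $G$ with the graph $G'$ in which the $t$ free copies are removed. The two differ in $t\,m_H$ edges.

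\textbf{Key steps.} First, fix any coloring $c$ output on $G'$. Because $c$ is independent of the random placement, a random free block of $q$ vertices in $W$ is monochromatic with probability about $\sum_i p_i^q$, where $p_i$ is the fraction of $W$ with color $i$. Moreover, it avoids colors already used by the fixed copies only rarely. Counting colors is the key point: a free copy adds to $\scr$ only if it is monochromatic in a color not already claimed. Second, I bound the expected number of free copies that contribute by $t\cdot O(1/k^{q-1})$-ish, or at least by a small constant times $t$. Third, I transfer this to $G$ using group privacy over $t m_H$ edges, which costs a factor $e^{\eps t m_H}$.

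That last factor is exponential in $t$, which is fatal if $t$ is linear in $k$. I would therefore handle the free copies one at a time, as in \Cref{thm:rep-lb-noisy}. For each copy $j$, compare the instance with copy $j$ present against the instance with copy $j$ absent. This costs $e^{\eps m_H}\le k^{c m_H}$, and the probability bound in the absent case is $O(k^{-(q-1)})$ over a random placement in a pool of size $\gg k$. With $c$ chosen so that $c\,m_H < (q-1)/2$, summing over $j$ shows the expected number of contributing free copies is $o(t)$. Taking $t = k/2$, the score is at most $k/2 + o(k) \le 2k/3$. This uses linearity of expectation exactly as in \eqref{eq:rdlb-perm-1}. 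Averaging over placements then yields a single bad instance.

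\textbf{Main obstacle.} The hard part is the fixed half. The $k/2$ fixed copies may be colored perfectly, and the per-copy bound in the absent world must be uniform over the encoder. So I need that, for any coloring, a uniformly random $q$-subset of the pool $W$ (with $|W|\gg qk$) is monochromatic with probability about $\sum_i p_i^q$. For this to be small, colors must be spread out. An adversarial coloring could, however, make $W$ entirely one color, which gives probability $1$ for one free copy. That costs only one color, though, so I would bound the contribution as "the number of distinct colors among monochromatic free blocks." Showing this number is small in expectation is the hardest step. I would first try a charging argument in which each color $i$ contributes at most $\min(1,\, t\,p_i^{q})$, and then optimize over $\sum_i p_i=1$.
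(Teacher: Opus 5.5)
There is a genuine gap, at exactly the step you call the main obstacle. In the one-copy-at-a-time comparison, only block $S_j$ is independent of $c=\cE(G^{-j})$; the encoder still sees every other free block. So for each $j$ you need an event that depends only on $(c,S_j)$, whose indicators sum over $j$ to an upper bound on the score, and which is rare for \emph{every} $c$. The event ``$S_j$ is monochromatic'' fails the last requirement: your one-colored-pool example makes it certain, so the $O(k^{-(q-1)})$ bound is false.

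Your proposed repair, bounding the number of distinct colors by $\sum_i\min(1,t\,p_i^q)$, assumes all free blocks are simultaneously independent of $c$. It does not split over $j$, so it cannot go through the per-copy comparison. Routing it through group privacy over all $t\,m_H$ edges does not help either: an $O(t^{1/q})$ expectation bound is useless against the factor $e^{\eps t m_H}$, and you would instead need a tail bound of order $k^{-\Omega(t)}$.

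The charging idea does work if the threshold is applied pointwise in $c$. Call a color heavy if it occupies more than a $\theta$ fraction of the pool $W$; there are at most $1/\theta$ such colors. Let $A_j$ be the event that $S_j$ is monochromatic in a light color. Then $\scr(G,c)\le (k-t)+1/\theta+\sum_j\ind{A_j}$. If $|W|\ge 2qk$, then given $G^{-j}$ the block $S_j$ is a uniform $q$-subset of at least $|W|/2$ unused pool vertices, so $\Pr[A_j]\le(2\theta)^{q-1}$ in the world without copy $j$. Hence $\E[\scr]\le (k-t)+1/\theta+t\,e^{\eps m_H}(2\theta)^{q-1}$, which is $k-t+o(k)$ for $\theta=k^{-1/2}$ and $c\,m_H\le 1/4$. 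Some such step is required; as written, your argument does not bound the score.

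Separately, your reason for abandoning the direct comparison with $G_\emptyset$ is mistaken, and that comparison is exactly the paper's proof. The cost $e^{\eps m_H k}\le k^{c\,m_H k}$ is affordable because, for a fixed coloring, a high score over a random placement has probability $k^{-\Omega(k)}$, not merely $e^{-\Omega(k)}$. The paper takes $n=k^2q$ and places all $k$ copies uniformly at random. It counts placements with $s=k/2$ monochromatic blocks in distinct colors using $e_s(n_1^q,\dots,n_k^q)\le n^{qs}/(s!\,s^s)$ (via AM--GM), which gives $\Pr_G[\scr(G,c)\ge k/2]\le (8\cdot 2^q/k)^{k/2}$. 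Averaging against $\cE(G_\emptyset)$ yields a single $G^*$. Group privacy then gives $\E[\scr(G^*,\cE(G^*))]\le k\bigl(\tfrac12+O(e^{2\eps m_H}/k)^{k/2}\bigr)<2k/3$ once $\eps<\log k/(4m_H)$ and $k$ is large. This route needs no pool, no fixed half, and no per-copy decomposition.
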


\begin{proof}
Let $q$ and $\ell$ denote the number of vertices and the number of edges of $H$, respectively.
Let $n = k^2 q$. We construct a hard distribution $\cP$ over graphs on $V = [n]$. A graph $G \sim \cP$ is formed by choosing $k$ disjoint subsets of $V$ of size $q$ uniformly at random, and planting a complete copy of $H$ on each subset. (The remaining $n - qk$ vertices in $G$ are isolated.) By construction, we have $\opt(G) = k$. 
Moreover, the total number of graphs in the support of $\cP$ is  $M_\mathrm{tot} = \frac{1}{k!} \frac{n!}{(n-qk)! (q!)^k}$.

\begin{claim} \label{claim:col-low-prob}
For any fixed coloring $c: V \to [k]$, we have $\Pr_{G \sim \cP}[\scr(G, c) \ge k/2] \leq O(1/k)^{k/2}$.
\end{claim}

\begin{proof}[Proof of \Cref{claim:col-low-prob}]
For any fixed coloring $c: V \to [k]$, let $\mathcal{G}(c)$ be
the set of graphs $G$ in the support of $\cP$ such that $\scr(G, c) \ge k/2$; we aim to upper bound $|\mathcal{G}(c)|$.
Indeed, to achieve a score of $s \ge k/2$, there must exist $s$ distinct colors, each containing at least one monochromatic copy of $H$. Let $n_i$ be the number of vertices assigned color $i$ by $c$. The number of monochromatic $q$-subsets in color $i$ is $\binom{n_i}{q} \le \frac{n_i^q}{q!}$.
The number of ways to form $s$ disjoint monochromatic copies of $H$ in $s$ distinct colors is bounded by $\frac{1}{(q!)^s} e_s(n_1^q, \dots, n_k^q)$, where $e_s$ is the elementary symmetric polynomial of degree $s$. 
\iffalse
Because the function $x \mapsto x^q$ is strictly convex for $q \ge 2$, the maximum of $e_s(n_1^q, \dots, n_k^q)$ over the simplex $\sum_{i=1}^k n_i = n$, $n_i \ge 0$, is attained at a symmetric boundary point where exactly $r$ variables are non-zero and equal to $n/r$, for some integer $r \in [s, k]$. Evaluating at such a point yields:
\[
e_s(n_1^q, \dots, n_k^q) \le \max_{r \in [s, k]} \binom{r}{s} \left(\frac{n}{r}\right)^{qs} \le \max_{r \ge s} \frac{r^s}{s!} r^{-qs} n^{qs} = \max_{r \ge s} \frac{1}{s!} r^{s(1-q)} n^{qs}.
\]
Since $q \ge 2$, the exponent $s(1-q) \le -s$. Thus, the maximum over $r \ge s$ is attained at $r=s$, strictly bounding the sum by $\frac{1}{s!} s^{-s} n^{qs}$.
\else
For $I$ such that $|I| = s$, let $a_I = \prod_{i \in I} n_i$.  Then, we  have
\begin{align*}
e_s(n_1^q, \dots, n_k^q) 
& = \sum_{|I| = s} a_I^q 
\leq \left(\max_{|I| = s} a_I \right)^{q-1} \sum_{|I| = s} a_I \\
& \leq \left(\frac{n}{s} \right)^{s(q-1)} e_s(n_1, \dots, n_k)
\leq \left(\frac{n}{s} \right)^{s(q-1)} \frac{n^s}{s!} \\
& = \frac{n^{qs}}{s! s^{s(q-1)}} \leq \frac{n^{qs}}{s! s^{s}},
\end{align*}
where the second inequality uses AM--GM on the $s$ coordinates in $I$ 
and the last inequality follows since $q \geq 2$.  
\fi
The remaining $k-s$ copies of $H$ can be placed arbitrarily on the remaining $n - qs$ vertices, which can be done in $\frac{1}{(k-s)!} \frac{(n-qs)!}{(n-qk)! (q!)^{k-s}}$ ways. Thus, the number of graphs satisfying $\scr(G, c) \ge s$ is bounded by:
\[
|\mathcal{G}(c)| \le \left(\frac{n^{qs}}{s! s^s (q!)^s}\right) \frac{(n-qs)!}{(k-s)! (n-qk)! (q!)^{k-s}}.
\]
The fraction of graphs in $\cP$ on which $c$ is successful is therefore bounded by:
\[
\gamma = \frac{|\mathcal{G}(c)|}{M_\mathrm{tot}} \le \frac{k!}{s! (k-s)!} \cdot s^{-s} \cdot \frac{n^{qs} (n-qs)!}{n!} = \binom{k}{s} \cdot s^{-s} \cdot \frac{n^{qs}}{\prod_{j=0}^{qs-1} (n-j)} \le 2^k s^{-s} \left(\frac{n}{n-qs+1}\right)^{qs}.
\]
For $s = k/2$, since $n = k^2 q$ and $k \ge 2$, we have $n - qs + 1 > k^2 q - qk/2 \ge k^2 q / 2$. Thus, $\frac{n}{n-qs+1} \le 2$.
Substituting $s = k/2$ gives:
\begin{align*}
\gamma \le 2^k (k/2)^{-k/2} 2^{qk/2} = \left(\frac{8 \cdot 2^q}{k}\right)^{k/2}. &\qedhere
\end{align*}
\end{proof}

Consider any $\eps$-DP implicit representation scheme $(\cE, \cD)$ under the coloring representation (\Cref{def:coloring-rep}). Let $G_{\emptyset}$ denote the empty graph. By \Cref{claim:col-low-prob} and an averaging argument, there must exist $G^*$ in the support of $\cP$ such that $\Pr_{c \gets \cE(G_\emptyset)}[\scr(G^*, c) \geq k/2] \leq O(1/k)^{k/2}$. This yields
\begin{align*}
\E_{c \gets \cE(G^*)}[\scr(G^*, c)] &\leq \frac{k}{2} + \frac{k}{2} \cdot \Pr_{c \gets \cE(G^*)}[\scr(G^*, c) \geq k/2] \\ &\overset{(\star)}{\leq} \frac{k}{2} + \frac{k}{2} \cdot e^{\eps k \ell} \cdot \Pr_{c \gets \cE(G_\emptyset)}[\scr(G^*, c) \geq k/2] \\ &\leq k \Paren{\frac{1}{2} + O(e^{2\eps\ell} / k)^{k/2}},
\end{align*}
where $(\star)$ follows from group privacy and the fact that $G^*$ contains $k\ell$ edges. Thus, if $\eps < \frac{\log k}{4\ell}$ and $k$ is sufficiently large, the approximation ratio is at most 2/3.
\end{proof}

\subsection{Subset Representation}

Recall that in the subset framework (Section~\ref{sec:spatial-restriction}), the encoder $\mathcal{E}(G)$ outputs a subset $S \subseteq V$ of a fixed size $k$, and the decoder $\cD(G, S)$ returns a subset $I \subseteq S$ satisfying the target property. We prove that any $\eps$-DP algorithm using this framework must incur a loss of $\Omega(e^{-\eps})$ in the approximation ratio. We focus on the Independent Set property.
\begin{theorem} \label{thm:rep-lb-spatial}
Let $\cG$ be the class of one-bounded-degree graphs (i.e. disjoint edges and isolated vertices). Any $\eps$-DP implicit representation scheme under the Subset representation (\Cref{def:subset-rep}) for Independent Set has approximation ratio at most $(1 - \Omega(e^{-\eps}))$.
\end{theorem}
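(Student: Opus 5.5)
Mirror the construction and argument of \Cref{thm:rep-lb-noisy}. Fix $k$ and take $V$ to consist of $k$ disjoint pairs $P_j = \{a_j, b_j\}$, $j \in [k]$, so $n = 2k$. Choose $Y \in \{0,1\}^k$ uniformly at random and let $G_Y$ contain the edge $\{a_j, b_j\}$ exactly when $Y_j = 1$. Then $G_Y$ is one-bounded-degree and $\opt(G_Y) \ge k$, since picking one vertex from every pair gives an independent set. The degenerate branch of the decoder in \Cref{def:subset-rep} is not triggered: here $|V| = 2k < k \cdot C_\cG$ for $C_\cG = \Delta+1 = 2$ with room to spare, and anyway the statement concerns the subset format. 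In the subset branch the decoder returns some $I \subseteq T$ that is independent in $G_Y$.

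\textbf{Loss per pair.} Fix the encoder output $T$, with $|T| = k$. Since $I \subseteq T$ is independent, $|I| \le k - \#\{j : Y_j = 1,\ P_j \subseteq T\}$. Let $U_j$ be the indicator that $P_j \subseteq T$. Because $|T| = k$ and there are exactly $k$ pairs, every pair with $|T \cap P_j| = 0$ forces some other pair to have $|T \cap P_j| = 2$. Hence $\sum_j U_j = \#\{j : T \cap P_j = \emptyset\}$, and
\[
k - |I| \ge \sum_j Y_j U_j .
\]

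\textbf{Privacy step.} Conditioned on $Y_{-j}$, the graphs $G_{Y_{-j},0}$ and $G_{Y_{-j},1}$ are neighbors. As in \Cref{thm:rep-lb-noisy}, DP gives
\[
\E[Y_j U_j] \ge e^{-\eps}\, \E[(1-Y_j) U_j].
\]
Summing over $j$ yields $\E\bigl[\sum_j Y_j U_j\bigr] \ge e^{-\eps}\, \E\bigl[\sum_j (1-Y_j) U_j\bigr]$.

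\textbf{Main obstacle.} We need a lower bound on $\E\bigl[\sum_j (1-Y_j) U_j\bigr]$, and the encoder could simply output $T = \{a_1, \ldots, a_k\}$, making every $U_j = 0$. Extra randomness in the instance is needed to forbid this. I would replace the pairs by triples $V_j = \{a_j, b_j, c_j\}$, so $n = 3k$, and when $Y_j = 1$ plant the edge on a uniformly random pair $\{w_j, w'_j\} \subseteq V_j$, analogous to the random $w_j$ in \Cref{thm:rep-lb-noisy}. The optimum is still $k$, and $T$ has size $k$.

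I then want a lower bound on the expected number of components $j$ with $|T \cap V_j| \ge 2$ and $Y_j = 0$. Hitting a planted edge happens with probability $1/3$ given $|T \cap V_j| = 2$. When $Y_j = 0$ the output $T$ is independent of $w_j$, so the planted pair lies inside $T$ with probability at least $1/3$ whenever $|T \cap V_j| \ge 2$. The decoder loses whenever $T$ is not one vertex per component and an edge falls inside.

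The remaining difficulty is that $T$ might contain exactly one vertex per component, in which case nothing is lost. To handle this, I would add $k$ extra isolated vertices and also require $\opt = k$ to be attained only through a random hidden structure. The cleanest fix is to pick $S^*$ uniformly among sets with one vertex per component, and to make all other vertices adjacent within components, using $(1,\cdot)$-degree edges between a vertex of $S^*$'s complement and a random partner. I would then tune the construction so that, with $Y_j = 0$, $T$ misses the random "safe" vertex with constant probability. This forces a constant expected loss per component, giving total loss $\Omega(e^{-\eps} k)$ and ratio $1 - \Omega(e^{-\eps})$.
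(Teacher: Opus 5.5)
\textbf{There is a genuine gap.} Your privacy-transfer step $\E\bigl[\sum_j Y_jU_j\bigr] \ge e^{-\eps}\,\E\bigl[\sum_j (1-Y_j)U_j\bigr]$ is the same as the paper's. The lower bound on $\E\bigl[\sum_j (1-Y_j)U_j\bigr]$ is the real content of the theorem, and your proposal never establishes it. None of the constructions you describe can give it.

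In each of them, the possible edge locations sit inside fixed, publicly known components, and the instance has too much slack. With $k$ triples you have $n = 3k$ and $\opt(G_Y) \ge 2k$, not $k$ as you write. The encoder can then output one vertex per component. That set is independent no matter where the active or hidden pairs lie, so the loss is identically zero.

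Your final ``fix'' moves in the wrong direction. Extra isolated vertices only give the encoder more safe choices. A triple in a one-bounded-degree graph carries at most one edge, so a $T$ with one vertex per component stays independent whether or not it ``misses the safe vertex.'' No constant expected loss is forced, and the last paragraph is a hope rather than an argument.

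\textbf{The missing idea.} You need $|T| = k$ to cover a constant fraction of a vertex set only slightly larger than $k$, while the locations of the hidden edges stay unpredictable to the encoder.

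The paper takes $n = \lfloor 5k/4\rfloor$ and a uniformly random matching $P=\{e_1,\dots,e_m\}$ with $m = \lfloor k/4\rfloor$, where $e_j$ is present iff $Y_j = 1$; this gives $\opt \ge n-m \ge k$. Given the visible graph with $L$ active edges, the hidden edges form a uniformly random matching on the $n-2L$ remaining vertices. The set $S$ contains $k' \ge k-2L \ge k/2$ of those vertices, so each hidden edge lies in $S$ with probability at least $k'(k'-1)/n^2 = \Omega(1)$. Hence $\E[Z] = \Omega(\E[m-L]) = \Omega(k)$. The DP step, applied for each fixed $P$, then gives $\E[X] = \Omega(e^{-\eps}k)$.

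Your triple idea can also be repaired by using only $k/2$ triples, so that $n = 3k/2$ and $\opt = k$. Then $T$ can place at most $2L + (k/2 - L)$ vertices without risk. In expectation, $k/4$ of its vertices must either complete an active edge, which is a certain loss, or be a second vertex in a hidden triple. The latter contains the hidden pair with probability at least $1/3$, and DP transfers that loss to the active case. This counting step is exactly what your proposal lacks.
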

\begin{proof}
Let $n = \lfloor 5k/4 \rfloor$. We define a distribution over graphs on $n$-size vertex set $V$ as follows.
First, draw a uniformly random matching $P$ of size $m = \lfloor k/4 \rfloor$ on $V$. Let $P = \{e_1, \dots, e_m\}$.
Next, draw a uniformly random binary vector $Y \in \{0, 1\}^m$.
The graph $G_{P, Y}$ has edge set $E = \{e_j \mid Y_j = 1\}$. Clearly, $G_{P, Y}$ has maximum degree $1$ and has independent set of size at least $n - m \geq k$.  

We refer to edges $e_j$ with $Y_j = 1$ as \emph{active}, and those with $Y_j = 0$ as \emph{hidden}.

Consider any $\eps$-DP implicit representation scheme $(\cE, \cD)$ under the subset representation (\Cref{def:subset-rep}). Let $S = \mathcal{E}(G_{P, Y})$. Since the extracted set\footnote{Note that the scheme does not revert to the baseline since, for the class $\cG$ of one-bounded-degree graphs, we have $C_{\cG} \geq 2$ (i.e. matching). Thus, $n < k \cdot C_{\cG}$ and we do not use the baseline.} $I_S \subseteq S$ is an independent set, it must omit at least one vertex for every edge of $G_{P, Y}$ contained in $S$. Let $X = \sum_{j=1}^m Y_j \cdot \mathbf{1}[e_j \subseteq S]$ be the number of edges of $G_{P, Y}$ contained in $S$. We have $|I_S| \le k - X$.

By $\eps$-DP, for each $j$, the probability of outputting $S$ decreases by at most a factor of $e^\eps$ if $e_j$ is removed. Thus,
\begin{align*}
\E_{Y, \mathcal{E}} [ Y_j \cdot \mathbf{1}[e_j \subseteq S] ] & = \frac{1}{2} \E_{Y_{-j}} [ \Pr_{\mathcal{E}}[e_j \subseteq S \mid Y_j = 1] ] \\
& \ge \frac{1}{2} e^{-\eps} \E_{Y_{-j}} [ \Pr_{\mathcal{E}}[e_j \subseteq S \mid Y_j = 0] ] = e^{-\eps} \E_{Y, \mathcal{E}} [ (1 - Y_j) \cdot \mathbf{1}[e_j \subseteq S] ].
\end{align*}
Summing over all $j \in [m]$ and taking the expectation over $P$ and $S$, we get:
\[
\E_{P, Y, \mathcal{E}}[X] \ge e^{-\eps} \E_{P, Y, \mathcal{E}}\left[ \sum_{j=1}^m (1 - Y_j) \cdot \mathbf{1}[e_j \subseteq S] \right] = e^{-\eps} \E_{P, Y, \mathcal{E}}[Z],
\]
where $Z$ is the number of hidden edges (those with $Y_j = 0$) contained in $S$.

Conditioned on $G_{P, Y}$, the graph has $L = |\{j \in [m] \mid Y_j = 1\}|$ active edges. The remaining $m - L$ hidden edges of $P$ form a uniformly random matching on the $n - 2L$ vertices outside $G_{P, Y}$. Since $S$ is determined solely by $G_{P, Y}$ and its intersection with the outside vertices has size $k' \ge k - 2L \ge k - 2m \ge k/2$, the expected number of hidden edges contained in these $k'$ vertices is:
\[
\E_{P}[Z \mid G_{P, Y}, S] = (m - L) \frac{k'(k'-1)}{(n-2L)(n-2L-1)} \ge (m - L) \frac{(k/2)(k/2-1)}{n^2} \geq \Omega(m - L).
\]
Since $\E_{Y}[L] = m/2$, we get $\E_{Y}[m - L] = m/2 \ge \Omega(k)$. Thus, $\E_{P, Y, \mathcal{E}}[Z] = \Omega(k)$ and $\E_{P, Y, \mathcal{E}}[X] \ge \Omega\left(e^{-\eps} \cdot k\right)$. Hence, $\E_{P, Y, \mathcal{E}}[|I_S|] \le k\left(1 - \Omega\Paren{e^{-\eps}}\right)$. 
\end{proof}


\subsection{Triangle-Transversal}

Finally, we argue that, for some problem and natural representation, it is \emph{impossible} to achieve good privacy-approximation tradeoff. In particular, we consider the Triangle-Transversal problem.
For such deletion problems, a natural representation is to generalize the Permutation decoder from \Cref{def:perm-rep} for Vertex Cover. In particular, the encoder outputs a permutation $\pi$ of $V$ and the decoder deletes the earliest vertex of each prohibited structure (triangles), as formalized below.

\begin{definition}[Permutation Representation for Triangle-Transversal] \label{def:perm-rep-triangle}
In the \emph{permutation representation}, the encoder outputs a permutation $\pi$ of the vertex set $V$ as the implicit representation. The decoder $\cD(G, \pi)$ operates by selecting the earliest element in $\pi$ for each triangle $\{u, v, w\} \subseteq V$ (where $\{u, v\}, \{v, w\}, \{u, w\} \in E$), forming the set $S_\pi(E) \coloneqq \{\min_\pi(\{u, v, w\}) \mid \{u, v, w\} \subseteq V \text{ where } \{u, v\}, \{v, w\}, \{u, w\} \in E\}$. 
\end{definition}

We show that, unlike Vertex Cover (\Cref{thm:vc-main}), this representation inherently prevents any strong privacy-approximation tradeoff:

\begin{theorem} \label{thm:sep-tt}
Any $\eps$-DP implicit representation scheme under the permutation representation (\Cref{def:perm-rep-triangle}) for Triangle-Transversal has approximation ratio at least $ \Omega(n \cdot e^{-\eps})$.
\end{theorem}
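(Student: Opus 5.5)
The plan is a one-edge planting argument, in the same spirit as the proof of \Cref{thm:rep-lb-noisy}. The base graph is triangle-free, so the encoder learns essentially nothing about where a single added edge will land. Yet that one edge creates $\Theta(n)$ triangles whose optimal transversal is a single vertex.

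\textbf{Construction.} Let $n$ be even and split $V$ into $A \cup B$ with $|A| = |B| = n/2$. Let $G_0$ be the complete bipartite graph $K_{A,B}$, which is triangle-free. Draw a side $P \in \{A, B\}$ uniformly at random and let $Q$ be the other side. Draw an unordered pair $\{u,v\} \subseteq P$ uniformly at random, and set $G_{uv} := G_0 + \{u,v\}$.
\begin{itemize}[nosep]
\item The triangles of $G_{uv}$ are exactly $\{u,v,x\}$ for $x \in Q$, since every triangle must use the single non-bipartite edge.
\item Hence $\opt(G_{uv}) = 1$: deleting $u$ alone suffices.
\item $G_{uv} \sim G_0$, since they differ in one edge.
\end{itemize}
Under the decoder of \Cref{def:perm-rep-triangle}, the triangle $\{u,v,x\}$ contributes $x$ to $S_\pi$ exactly when $x <_\pi \min_\pi(u,v)$. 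Therefore $|S_\pi| \ge \#\{x \in Q : x <_\pi \min_\pi(u,v)\}$.

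\textbf{Key step: every fixed permutation does badly on average.} Fix any permutation $\pi$. Let $t_A$ (resp.\ $t_B$) be the $\pi$-position of the $\lceil |A|/4\rceil$-th vertex of $A$ (resp.\ of $B$). Without loss of generality $t_A \ge t_B$, so at least $n/8$ vertices of $B$ appear before position $t_A$. Now consider the event that $P = A$ and neither $u$ nor $v$ lies among the first quarter of $A$ in $\pi$. This event has probability at least $\frac12 \cdot \Omega(1)$, since a random pair avoids a fixed quarter of $A$ with probability about $9/16$. On this event $\min_\pi(u,v)$ comes after $t_A$, so $|S_\pi| \ge n/8$. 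Thus, for every fixed $\pi$, with the random choice of $(P,\{u,v\})$ independent of $\pi$,
\[
\Pr_{P,\{u,v\}}\left[\#\{x \in Q : x <_\pi \min_\pi(u,v)\} \ge n/8\right] \ge c_0
\]
for an absolute constant $c_0 > 0$. Averaging over $\pi \sim \cE(G_0)$, which is independent of $(P,\{u,v\})$, gives the same bound for the encoder's output on $G_0$.

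\textbf{Transfer by privacy.} For each fixed pair $(P,\{u,v\})$, let $\cO_{uv}$ be the set of permutations $\pi$ under which at least $n/8$ vertices of $Q$ precede $\min_\pi(u,v)$. Since $G_{uv} \sim G_0$, the $\eps$-DP guarantee gives $\Pr[\cE(G_{uv}) \in \cO_{uv}] \ge e^{-\eps} \Pr[\cE(G_0) \in \cO_{uv}]$. Averaging over the random pair then yields
\[
\E_{P,\{u,v\}}\,\E_{\pi\sim\cE(G_{uv})}\left[|S_\pi|\right] \ge \frac{n}{8}\, e^{-\eps} c_0 .
\]
Since $\opt(G_{uv}) = 1$ for every pair, some instance $G_{uv}$ has expected decoded transversal size $\Omega(n e^{-\eps}) \cdot \opt$. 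This is the claimed approximation lower bound.

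\textbf{Main obstacle.} The delicate part is the combinatorial step showing that no single permutation can put both candidate ``centers'' early on both sides simultaneously. The point is that the encoder cannot simply place all of $A$ first: the edge may be planted in $B$ instead, in which case every vertex of $A$ acts as a triangle leaf. The quarter-position comparison between $t_A$ and $t_B$ is exactly what handles this, so that is the step I would check most carefully, including the constants from rounding $n/2$, $|A|/4$, and the probability that a random pair avoids the first quarter.
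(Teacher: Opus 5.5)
Your proof is correct and shares the paper's skeleton. You use the same base graph $K_{n/2,n/2}$ and the same planted intra-side edge, whose triangles are $\{u,v,x\}$ for $x$ on the other side, so $\opt = 1$. You also use a single application of $\eps$-DP between $K_{n/2,n/2}$ and the planted graph.

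The difference lies in the combinatorial claim. The paper argues in expectation via a four-vertex symmetry. For a fixed $\pi$, it picks a random side, a random pair $\{x,y\}$ on it, and a random pair $\{w,t\}$ on the other side. One of $x,y,w,t$ must be $\pi$-first, so the average of $\#\{w \in V_{1-i} : w <_\pi x,y\}$ is at least $n_0/4$. It then fixes one pair with $\kappa(x,y)=\Omega(n)$ under $\cE(G)$ and transfers the expectation of this nonnegative count through DP. You instead compare the $\lceil |A|/4\rceil$-th and $\lceil |B|/4\rceil$-th vertices of each side in $\pi$. This gives, for every fixed $\pi$, a constant-probability event with a deterministic threshold $n/8$. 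You then apply DP to the events $\cO_{uv}$ and average over the planted pair afterwards.

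The paper's symmetry trick is tidier, with no rounding or small-$n$ bookkeeping. Your version yields a tail statement: on a random planted instance, the decoded transversal has size at least $n/8$ with probability $\Omega(e^{-\eps})$. It also uses only the event form of DP, which makes the extension to $(\eps,\delta)$-DP with small constant $\delta$, mentioned in the paper's conclusion, immediate.

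The constants you flagged are fine. For $|A| \ge 3$, a uniformly random pair in $A$ avoids the first $\lceil |A|/4\rceil$ vertices of $A$ with probability at least $3/10$. Very small $n$ is absorbed into the $\Omega(\cdot)$, since the ratio is always at least $1$.
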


In other words, to obtain even constant approximation, one needs $\eps$ to be $\Omega(\ln n)$, meanwhile all of our algorithms can achieve constant approximation when $\eps$ is independent of $N$.

\begin{proof}[Proof of \Cref{thm:sep-tt}]
Let $n = 2n_0$. 
Partition $V$ into two disjoint subsets $V_0$ and $V_1$ of size $n_0$. Let $G$ be the complete bipartite graph $K_{n_0, n_0}$ between $V_0$ and $V_1$. 

Consider any $\eps$-DP implicit representation scheme $(\cE, \cD)$ under the permutation representation (\Cref{def:perm-rep-triangle}).
We start by proving the following claim:
\begin{claim} \label{claim:bad-edge}
There exist vertices $x, y$ on the same side $V_i$ such that
\begin{align*}
\kappa(x, y) := \E_{\pi \gets \cE(G)}[|\{w \in V_{1 - i} \mid w <_{\pi} x, y\}|] \geq \Omega(n).
\end{align*}
\end{claim}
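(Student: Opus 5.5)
The claim is purely combinatorial. It holds for every fixed permutation $\pi$, with no use of privacy, so I will prove a pointwise statement and then average. Fix any permutation $\pi$ of $V$. For each pair $x \neq y$ on the same side $V_i$, let $\kappa_\pi(x,y)$ be the number of $w \in V_{1-i}$ that precede both $x$ and $y$ in $\pi$. The plan is to show that $\sum_{i}\sum_{\{x,y\}\subseteq V_i} \kappa_\pi(x,y) \ge c\, n^3$ for an absolute constant $c>0$. Given this, linearity of expectation gives $\sum_{i}\sum_{\{x,y\}\subseteq V_i}\kappa(x,y) \ge c n^3$. There are fewer than $n^2$ same-side pairs, so some pair has $\kappa(x,y) \ge c n = \Omega(n)$.

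To prove the pointwise bound, I would rewrite the sum by counting triples $(w,x,y)$ with $w \in V_{1-i}$, $x,y \in V_i$ and $w$ preceding both $x$ and $y$. Equivalently, for each vertex $w$, let $a(w)$ be the number of opposite-side vertices that come after $w$ in $\pi$. Then
\[ \sum_{i}\sum_{\{x,y\}\subseteq V_i} \kappa_\pi(x,y) = \sum_{w \in V} \binom{a(w)}{2}. \]
It therefore suffices to show that $\Omega(n)$ vertices $w$ have $a(w) = \Omega(n)$.

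For this I would look at the first half of $\pi$, meaning its first $n_0 = n/2$ positions. Either at least $n_0/2$ of these positions are filled by $V_0$-vertices, or at least $n_0/2$ of them are filled by $V_1$-vertices; say the latter. Then at least $n_0/2$ vertices of $V_0$ lie in the second half. Now take any $V_1$-vertex $w$ among the first $n_0/4$ $V_1$-vertices of the first half. All $V_0$-vertices in the second half come after $w$, so $a(w) \ge n_0/2$. This yields at least $n_0/4$ vertices with $a(w) \ge n_0/2$, hence
\[ \sum_w \binom{a(w)}{2} \ge \frac{n_0}{4}\binom{n_0/2}{2} = \Omega(n^3). \]

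The only mildly delicate point is this counting step. Even simpler, I could just pick the earlier side: if $V_1$'s median vertex precedes $V_0$'s median vertex in $\pi$, then the first $n_0/2$ vertices of $V_1$ each precede at least $n_0/2$ vertices of $V_0$. That again gives $\Omega(n)$ vertices with $a(w)=\Omega(n)$, and I would use whichever version reads more cleanly. Finally, I would take $x,y$ to be the maximizing same-side pair; the side $V_i$ is allowed to depend on $\cE$, and the claim permits this.
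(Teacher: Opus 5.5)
Your proof is correct, but you obtain the pointwise bound by a different route than the paper. Both arguments bound the average of $\kappa_\pi(x,y)$ over same-side pairs for each fixed $\pi$ and then average over $\pi \gets \cE(G)$, without using privacy.

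The paper uses a four-vertex symmetrization. It samples $\{x,y\} \subseteq V_0$ and $\{w,t\} \subseteq V_1$ and rewrites the average as $\frac{n_0}{4}\,\E\bigl[\ind{w <_\pi x, y} + \ind{t <_\pi x, y} + \ind{x <_\pi w, t} + \ind{y <_\pi w, t}\bigr]$. Whichever of the four vertices appears first in $\pi$ makes some indicator equal to $1$, so the average is at least $n_0/4 = n/8$ in one line.

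You instead double count the total as $\sum_w \binom{a(w)}{2}$. This reduces the claim to the structural fact that a constant fraction of one side precedes a constant fraction of the other. Both your half-split and your median versions of that step are valid, at the cost of a worse constant and a small case split.

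Your identity also admits a case-free finish that recovers the paper's constant. Each of the $n_0^2$ cross pairs is counted in exactly one $a(w)$, so $\sum_{w \in V} a(w) = n_0^2$. Convexity of $x \mapsto x(x-1)/2$ then gives $\sum_w \binom{a(w)}{2} \ge n \cdot \frac{(n_0/2)(n_0/2 - 1)}{2}$. Over the $n_0(n_0-1)$ same-side pairs this is an average of $\frac{n_0}{4}\cdot\frac{n_0-2}{n_0-1} \approx n/8$.
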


\begin{proof}[Proof of \Cref{claim:bad-edge}]
Define $\kappa_{\pi}(x, y) = |\{w \in V_{1 - i} \mid w <_{\pi} x, y\}|$.
Taking the average of $\kappa(x, y)$ across all $x, y$, we have
\begin{align} \label{eq:exp-pair-expanded}
\E_{i \sim \{0,1\}, \{x, y\} \sim \binom{V_i}{2}}[\kappa(x, y)] = \E_\pi \left[\E_{i \sim \{0, 1\}, \{x, y\} \sim \binom{V_i}{2}}[\kappa_{\pi}(x, y)]\right].
\end{align}
To bound the inner expectation on the RHS, we further rearrange this expectation as follows:
\begin{align*}
&\E_{i \sim \{0, 1\}, \{x, y\} \sim \binom{V_i}{2}}[\kappa_{\pi}(x, y)] \\ &= \E_{i \sim \{0, 1\}, \{x, y\} \sim \binom{V_i}{2}}\left[\sum_{w \in V_{1-i}} \ind{w <_\pi x, y}\right] \\
&= n_0 \cdot \E_{i \sim \{0, 1\}, \{x, y\} \sim \binom{V_i}{2}, w \sim V_{1-i}}\left[\ind{w <_\pi x, y}\right] \\
&= n_0 \cdot \E_{i \sim \{0, 1\}, \{x, y\} \sim \binom{V_i}{2}, \{w, t\} \sim \binom{V_{1-i}}{2}}\left[\ind{w <_\pi x, y}\right] \\
&= \frac{n_0}{4} \cdot \E_{\{x, y\} \sim \binom{V_0}{2}, \{w, t\} \sim \binom{V_{1}}{2}}\left[\ind{w <_\pi x, y} + \ind{t <_\pi x, y} + \ind{x <_\pi w, t} + \ind{y <_\pi w, t}\right]
\end{align*}
where, in the second-to-last equality, we simply introduce $t$ which does not affect the distribution of $w$, and, in the last equality, we use the symmetry between $\{x, y\}$ and $\{w, t\}$.

Finally, observe that, at least one of $\ind{w <_\pi x, y}, \ind{t <_\pi x, y}, \ind{x <_\pi w, t}, \ind{y <_\pi w, t}$ must be true because one of $x, y, w, t$ must appear first in $\pi$. Hence, we have
\begin{align*}
\E_{i \sim \{0, 1\}, \{x, y\} \sim \binom{V_i}{2}}[\kappa_{\pi}(x, y)] \geq \frac{n_0}{4} = \frac{n}{8}.
\end{align*}
Plugging this into \eqref{eq:exp-pair-expanded} yields $\E_{i, x, y}[\kappa(x, y)] \geq \Omega(n)$. Thus, there exist some $x, y$ on the same side $V_i$ such that $\kappa(x, y) \geq \Omega(n)$.
\end{proof}

Let $x, y \in V_i$ be as guaranteed in \Cref{claim:bad-edge}. Consider the graph $G' = (V, E')$ which is the same as $G$ except with an additional edge $(x, y)$. We have $\opt(G') = 1$ since we can remove either $x$ or $y$, which makes the graph triangle-free. 

Meanwhile, for a given representation $\pi$, the decoder will remove all $w \in V_{1 - i}$ such that $w <_{\pi} x, y$ because $\{w, x, y\}$ forms a triangle. This implies that the expected solution size produced by $(\cE, \cD)$ on $G'$ is at least
\begin{align*}
\E_{\pi \gets \cE(G')}[|\{w \in V_{1 - i} \mid w <_{\pi} x, y\}|] \geq e^{-\eps} \cdot \E_{\pi \gets \cE(G)}[|\{w \in V_{1 - i} \mid w <_{\pi} x, y\}|] \geq \Omega(n \cdot e^{-\eps}),
\end{align*}
where the first inequality is due to $\eps$-DP and the second is due to \Cref{claim:bad-edge}.
\end{proof}

\section{Conclusion and Open Questions}
\label{sec:conclusion}

In this paper, we advanced the study of combinatorial optimization under DP by generalizing the implicit representation model and allowing the encoder to run in FPT time. This relaxed computational model allowed us to bypass polynomial-time inapproximability barriers and achieve $(1 \pm \tO(1/\eps))$-approximations for fundamental graph problems such as Vertex Cover, $d$-Hitting Set, and maximum induced subgraphs on sparse graphs. We also formalized the encoder-decoder framework and established representation-independent lower bounds under Gap-ETH with advice, ruling out any hope of obtaining good approximation ratios when $\eps$ is too small.

Our work leaves several intriguing avenues for future research:
\begin{enumerate}[nosep]
    \item \textbf{Better Approximations.}
    An obvious open question is to improve the privacy-approximation tradeoff of our algorithms in \Cref{sec:alg}. There are two directions here:
    \begin{itemize}[nosep]
    \item \textbf{Current Representation:} First is to keep the same representation as in \Cref{sec:alg}. In this direction, the remaining gap in the approximation ratios between \Cref{sec:alg} and the lower bounds in \Cref{sec:rep-lower-bounds} is in the dependency on $\eps$. The former has $\tO(1/\eps)$ dependency whereas the latter has $\Omega(e^{-\eps})$ dependency. Closing this gap is an intriguing question.
    \item \textbf{New Representation:} Another potential direction is to use completely different representations. Our representation-independent lower bounds in \Cref{sec:lower-bounds} still apply here. However, these only hold for small $\eps$. When $\eps$ is a large constant, our lower bounds do not even rule out \emph{exact} algorithms. Giving such a scheme, or showing that this is impossible, would be a fundamental contribution to the model.
    \end{itemize}
    \item \textbf{Expanding the Implicit Representation Framework for FPT Algorithms.} While we developed techniques for several canonical problems and obtained some initial results, the FPT literature is rich and full of many innovative techniques, including those beyond graph algorithms. (See, e.g.,  \cite{DowneyF13,CyganFKLMPPS15}.) Extending the framework to other FPT techniques remains an exciting direction.
    \item \textbf{Beyond FPT vs Polynomial-Time.} While we restrict our decoder to run in polynomial time and let our encoder run in FPT time, it makes sense, in principle, to consider the implicit representation framework under any setting where the decoder is restricted to be ``simpler'' than the encoder. For example, we may allow the encoder to run in polynomial time and the decoder to run only in sublinear time. Another natural setting is to restrict the decoder to be ``local'' in the sense that it must decode whether a vertex is in the solution by just looking at its neighbors and the representation. (This is satisfied by the permutation representation.)
    \item \textbf{Approximate-DP.} While we keep the scope of our work to just \emph{pure-DP} (i.e., $\eps$-DP), it is a natural direction to also consider \emph{approximate-DP} (i.e., $(\eps, \delta)$-DP) as well. We remark that some of the lower bounds (such as \Cref{thm:rep-lb-noisy,thm:sep-tt}) can be easily extended to the approximate-DP where $\delta$ is a sufficiently small constant (depending on $\eps$). However, some other lower bounds, such as the representation-independent lower bounds (\Cref{thm:main-rep-ind-lb}), do not transfer directly to the approximate-DP setting. Hence, a strong separation between the two settings remains a possibility.
\end{enumerate}

\paragraph{Acknowledgment.} We thank Badih Ghazi for helpful discussions in the early stages of this work. Pasin would also like to thank Jittat Fakcharoenphol, Bingkai Lin and Vorapong Suppakitpaisarn for insightful discussions on related problems.

\paragraph{AI Disclosure.} All core ideas of the algorithms and proofs in this work were discovered by the authors. We used Gemini to help flesh out proof sketches to drafts of full proofs, and to help with drafting the introduction and conclusion. We have verified and heavily edited these drafts. We take full responsibility for all claims presented in this paper.

\bibliographystyle{alpha}
\bibliography{ref}

\end{document}